\DeclareMathAlphabet{\pazocal}{OMS}{zplm}{m}{n}
\let\mathcal\pazocal
\setlist[enumerate]{leftmargin=*}
\spnewtheorem{theorem}{Theorem}{\bfseries}{\itshape}
\spnewtheorem{corollary}[theorem]{Corollary}{\bfseries}{\itshape}
\spnewtheorem{lemma}[theorem]{Lemma}{\bfseries}{\itshape}
\spnewtheorem{proposition}[theorem]{Proposition}{\bfseries}{\itshape}
\spnewtheorem{definition}[theorem]{Definition}{\bfseries}{\itshape}
\spnewtheorem{remark}[theorem]{Remark}{\bfseries}{\upshape}
\spnewtheorem{assumption}[theorem]{Assumption}{\bfseries}{\itshape}
\def \R{\mathbb{R}}               % \R reelle Zahlen
\def \N{\mathbb{N}}               % \N nat"urliche Z.
\def \1{{\bf 1}}                % \1 1-Vektor
\def \0{{\bf 0}}
\def\qed{\hfill$\Box$}
\def\covid{Covid-19 }
\def\covidc{Covid-19}
\def \trace{\operatorname{tr}}
\definecolor{myred}{rgb}{0.9,0,0}  %rot
\definecolor{mygreen}{rgb}{0,0.7,0}  %gruen
\definecolor{myblue}{rgb}{0.2,0,0.8}  %gruen
\definecolor{orange}{rgb}{1,0.6,0}  %orange
\definecolor{olive}{rgb}{0.5,0.5,0}  
\definecolor{mylila}{rgb}{0.8,0.5,0.2}  
\definecolor{mygrey}{rgb}{0.6,0.6,0.6}  
\definecolor{mybrown}{rgb}{0.65,0.16,0.16}  
\definecolor{mymaroon}{rgb}{0.11,0.0,0.0}
\def \one{\mathcal{I}}
\newcommand{\condvar}{{Q}}  %\widehat{Q}
\newcommand{\condmean}{M}  % \widehat{Y}
\newcommand{\condvarEKF}{{\widetilde{Q}}}  %\widehat{Q}
\newcommand{\condmeanEKF}{{\widetilde{M}}}  % \widehat{Y}
\newcommand{\PoisRate}{\lambda}  %\overline{ Y}
\newcommand{\TimCh}{\tau}  %\overline{ Y}
\def\var{\operatorname{Var}}
\def\cov{\operatorname{Cov}}
\newcommand{\Var}{\var}
\newcommand{\Noise}{{\mathcal{B}}}
\newcommand{\norm}[1]{\|#1\|}
\newcommand{\Zpath}{{\mathcal{Z}}}
\newcommand{\Zpathn}{{\Zpath_n}} 
\def\cf{{ \color{black} \widetilde{f}_n}}
\def\ch{{ \color{black} \widetilde{h}_n}}
\def\csigma{{\widetilde{\sigma}_n}}
\def\cg{{\widetilde{g}_n}}
\def\cell{{\widetilde{\ell}_n}}
\def\cb{{\widetilde{b}_n}}
\def\driftX{f_X}
\def\diffX{\sigma_X}
\def\dfc{{U}}
\def\dfcmean{{\condmean}_{\dfc}}
\def\dfcvar{{\condvar}_{\dfc}}
\def\Fprior{{\mathcal{F}^I_0}}
\def\L{L}
\def\LV{\L^V}
\def\LR{\L^R}
\def\K{d}  %\overline{K}
\def\KV{\K^V}
\def\KR{\K^R}
\def\PP{P}
\def\PV{\PP^V}
\def\PR{\PP^R}
\def\psiV{\psi^V}
\def\psiR{\psi^R}
\def\Vin{V^\text{in}}
\def\Vout{V^\text{out}}
\def\Rin{R^\text{in}}
\def\Rout{R^\text{out}}
\def\Count{\Theta}
\date{}    
\begin{document}
	
	\title{Estimating Unobservable States  in Stochastic Epidemic Models with Partial Information}
	%\subtitle{~}
	
	\titlerunning{Estimating Unobservable States  in Stochastic Epidemic Models}        % if too long for running head
	
	\author{Florent Ouabo Kamkumo  \and Ibrahim Mbouandi Njiasse \and Ralf Wunderlich}
	
	\authorrunning{F.~Ouabo Kamkumo, I.~Mbouandi Njiasse, R.~Wunderlich} % if too long for running head
	
	\institute{Brandenburg University of Technology Cottbus-Senftenberg, Institute of Mathematics, P.O. Box 101344, 03013 Cottbus, Germany; \\ 
		%              Tel.: +49 151 42844783\\    
		\email{\texttt{Florent.OuaboKamkumo / Ibrahim.MbouandiNjiasse / ralf.wunderlich@b-tu.de}}  
	}
	
	%\date{Received: date / Accepted: date}
	\date{Version of  \today}

	\maketitle
	\begin{abstract}
		 This article investigates stochastic epidemic models with partial information and addresses the estimation of current values of not directly observable states. The latter is also called nowcasting and related to the so-called ``dark figure'' problem, which concerns, for example, the estimation of unknown numbers of asymptomatic and undetected infections. 
			
		The study is based on Ouabo Kamkumo et al.~\cite{kamkumo2025stochastic}, which provides detailed information about stochastic multi-compartment epidemic models with partial information and various examples. Starting point is a description of the state dynamics by a system of nonlinear stochastic  recursions resulting from a time-discretization of a diffusion approximation of the underlying counting processes. The state vector is decomposed into an observable and an unobservable  component. The latter is estimated from the observations using the extended Kalman filter approach in order to take into account the nonlinearity of the state dynamics. Numerical simulations for a \covid model with partial information are presented to verify  the performance and accuracy of the estimation method.	
	\end{abstract}	
	\keywords{Stochastic epidemic model \and    Diffusion approximation \and Partial information \and Extended Kalman filter \and \covid \and  Estimation of dark figures}		
	% \PACS{PACS code1 \and PACS code2 \and more} The number depend on the field. 
	\subclass{%60J25 % Continuous-time Markov processes on general state spaces
		     92D30 %Epidemiology 
		\and 92-10 % Mathematical modeling or simulation for problems pertaining to biology
		\and  60J60 %(Diffusion processes, 			
		\and 60G35  % 	Signal detection and filtering (aspects of stochastic processes)
		\and 62M20 %Inference from stochastic processes and prediction			
	}	
	\newpage
	% Add Table of Contents
	
	\setcounter{tocdepth}{2}
	\tableofcontents
	
	\newpage
	
\section{Introduction }
Mathematical epidemic models play an important role in predicting, controlling, and even eradicating infectious diseases. They describe the dynamics of the behavior of a particular disease spreading in a population, such as the recent \covid pandemic.  Since the course of an epidemic is influenced by various uncertainties, such models should include stochastic components to account for both forecast uncertainties and  nowcast uncertainties.  While the former are related to unpredictable fluctuations in the future course of the epidemic, nowcast uncertainties refer to the inability to accurately capture all components of the current state of the epidemic. The description of the current status of an epidemic usually suffers from so-called ``dark figures'', for example, due to unreported or undetected infections. There is often a significant discrepancy between the actual number of infections and the reported or confirmed cases. This is due to people with mild symptoms not seeking medical attention and screening, or asymptomatic infections not presenting any symptoms at all. 	

Furthermore, it is known from the \covid pandemic, see \cite{CharpentierElieLauriereTran2020}, that testing restrictions, such as regional differences in availability, capacity, and accessibility, also contribute to underreporting. Delays in data collection, administrative problems, and varying screening criteria, such as prioritizing symptomatic individuals, can further distort the number of reported cases. In addition, social stigma, fear of isolation, or concerns about the consequences of a positive diagnosis may discourage people from getting tested or reporting their symptoms, further exacerbating underreporting.	
%Further, there are cases with delayed or unavailable testing results, and instances where surveillance systems fail to capture all infections.  

 Nowcast uncertainties complicate the estimation of important epidemic measures (e.g. effective reproduction rate, infection prevalence) and hinder epidemic management. Reducing the impact of these uncertainties is crucial to gain a more accurate insight into the actual spread of a disease, provide guidance for the public health response and implement effective control measures. Developing epidemic models that explicitly account for the dark figures, can play a crucial role in estimating and subsequently reducing the influence of undetected infected individuals among a given population. This approach is particularly relevant in the context of managing infectious diseases like the \covid pandemic.

In this article, we focus on such nowcast uncertainties and the estimation of dark figures, which represent the unobservable or hidden part in the description of the epidemic course.
The study is based on our paper \cite{kamkumo2025stochastic}, in which we explain in detail the mathematical modeling approach that leads to stochastic epidemic models with partial information and discuss various examples. We use the results of this paper in \cite{njiasse2025stochastic} for the study of
social planner's decision-making problems  to achieve cost-effective containment of an epidemic. These problems are treated as a stochastic optimal control problems under partial information, and solved by dynamic programming techniques.

\paragraph{ Literature Review on General Filtering Methods}
 Statistical methods such as filtering for estimating unobservable  states are based on stochastic  models with partial observation, which divide the states into observable and latent (unobservable) states. O'Neill and Roberts \cite{o1999bayesian} address the problem of the frequent absence of data concerning the infection process when analyzing epidemic models. They use the Markov chain Monte Carlo (MCMC) approach in a Bayesian framework to infer missing data and unknown parameters of interest. Britton and O'Neill \cite{britton2002bayesian} proposed a method for estimating the infection rate and the average number of social contacts of an individual based on a random graph model that behaves like an SIR model. They developed an MCMC method to facilitate Bayesian inference for the parameters of both the epidemic model and the underlying unknown social structure.  Calvetti et al.~\cite{calvetti2021bayesian}  consider a modified SEIR (susceptible-exposed-infected-recovered) model to describe the \covid pandemic and propose an estimation method for the  temporal evolution of the unknown state and model parameters based on noisy observations of new daily infections. Their estimation approach uses a Bayesian particle filtering algorithm. Similarly,  Lal  et al.~\cite{lal2021application} consider a SIRD (susceptible, infected, recovered, deceased) model to describe the \covid pandemic. They then used the ensemble Kalman filter (EnKF) to estimate both the model parameters and the unobservable state. 
 Colaneri et al.~\cite{colaneri2022invisible} consider discrete-time stochastic SIR model, where the transmission rate and the true number of infectious individuals are random and unobservable. They follow a hidden Markov model (HMM) approach and apply nested particle filtering approach to estimate the  reproduction rate and the model parameters.
 Alyami  and Das  \cite{alyami2023extended} address the issue of outliers (unusual or extreme values in datasets) in \covid data that can lead to inaccurate estimates using traditional Gaussian Kalman filtering methods. The authors use a skew Kalman filter  that accounts for asymmetry in relevant quantities such as the distribution of the initial estimate, leading to a Bayesian inference method for state estimation.

The Bayesian framework for filtering is also considered \cite{ionides2011iterated,lekone2006statistical,streftaris2004bayesian}. It  is a powerful and flexible approach to statistical modeling and inference. However, as it is primarily a computer-based method, it does not provide  closed-form expressions for computing estimates.  MCMC, particle filtering or similar approaches used to approximate the posterior distribution can be slow to converge and computationally demanding, especially for large datasets or complex models. 

\paragraph{ Literature Review on Kalman Filtering}
The article Zhu  et al.~\cite{zhu2021extended} develops a stochastic SEIR(R)D-SD model (susceptible, exposed, infectious, recovered (re-infected), deceased, social distancing)  to model the dynamics of \covidc. The model takes into account immunity loss rates and social distancing factors to account for the uncertainties associated with the spread of \covidc. An extended Kalman filter (EKF) was used to estimate model parameters and transmission status, improving prediction accuracy. Sebbagh  and Kechida \cite{sebbagh2022ekf} used the EKF and  applied it to a  SIRD model  to predict the spread of \covid in Algeria. This method made it possible to predict daily infection, mortality and recovery rates, as well as basic reproduction numbers, thus contributing to effective pandemic management. Hasan  et al.~\cite{hasan2022new} proposes a statistical approach based on the SIRD model to describe the \covid pandemic.  They use an EKF to estimate parameters dynamically, providing insight into disease progression. Zeng and Ghanem  \cite{zeng2020dynamics} have used a switching Kalman filter  formalism based on a linear Gaussian model to apply dynamical learning and prediction to the new \covid daily cases.  Due to its design, this filter is more effective at estimating the hidden states of processes whose underlying dynamics are non-linear and non-Gaussian, which is often the case in practical applications. However, managing multiple models and switching mechanisms increases computational complexity, while the correct tuning of probabilities and transition mechanisms can prove difficult. In \cite{njiasse2025stochastic}, Njiasse et al. use the EKF to solve stochastic optimal control problems under partial information that arise in connection with the decision-making problems of a social planner seeking to contain an epidemic in a cost-efficient manner.  Chen et al.~\cite{chen2022conditional} explores the capabilities of a rich class of nonlinear stochastic models, known as the conditional nonlinear Gaussian system (CGNS), as approximate models for complex nonlinear systems. These models have several distinctive features. They allow the development of fast algorithms for the simultaneous estimation of parameters and unobserved variables, with uncertainty in the presence of partial observations. In addition, the associated conditional Gaussian distribution can be computed using closed-form analytical formulas, which greatly facilitate the mathematical analysis and numerical simulations of CGNS models. In fact, closed analytical formulas for conditional Gaussian distributions enable the development of efficient and statistically accurate algorithms for parameter estimation, data assimilation, and nowcasting uncertainties when only partial observations are available. In Chen and Majda \cite{chen2018conditional}, the authors consider a conditional Gaussian framework to understand and forecast complex multiscale nonlinear stochastic systems. They emphasize that such systems can effectively capture the non-Gaussian characteristics inherent in natural phenomena.

\paragraph{Our Contribution} 
 
This article derives stochastic epidemic models from large populations limits of microscopic models that are based on continuous-time Markov chains. These models explicitly take into account unobservable states of an epidemic and enable the application of filtering methods to estimate dark figures.  Further, we improve and extend traditional stochastic models by including a cascade of states to account for partially hidden compartments in which either inflow or outflow can be observed, but not both at the same time. This allows to capture all available information for the estimation process. A particular focus is on \covid models, in which the most important parameters have been calibrated using German \covid data. In order to account for nonlinearities in the state dynamics, we apply the extended Kalman filter to estimate unobservable states.
We conduct extensive numerical simulation studies for the COVID model to evaluate the performance in estimating undetected infections. They show that, despite initially rather inaccurate estimates, the filtering processes learn very quickly from observations and can track unobservable states with high accuracy, thereby supporting public health measures.  

\paragraph{Paper Organization}
In Section \ref{sec:model}, we introduce stochastic epidemic models with partial information and discuss two \covid models in more detail. Section \ref{sec:filter} is devoted to  the Kalman filtering approach for the estimation of unobservable states of epidemic models, and introduces the extended Kalman filter method. This method is applied to the \covid models in Section \ref{sec:application}, which also contains details on the calibration of the initial estimates of the Kalman filter. Finally, Section \ref{sec:NumericalResults} presents the results of extensive simulation studies that demonstrate the performance of the filter estimates.

%\medskip
\paragraph{Notation} Throughout this paper, we use the notation $x^1,\ldots,x^d$ for the entries of a vector $x\in \R^d$,  and $\norm{x}$ is the Euclidean norm of $x$. The entries of a matrix $A$ are denoted by $A^{ij}$,  and  $\norm{A}$ denotes the Frobenius norm of $A$. The identity matrix in $\R^{d\times d}$ is denoted by $\one_d$, and $0_{d}$ denotes the null vector in $\R^d$.

\section{ Epidemic Models}
\label{sec:model}
 In this section, we sketch the derivation of stochastic epidemic models in Subsection \ref{sec:epidemic_model}, starting from a microscopic level in which Poisson counting processes describe the state of an epidemic. By examining such models for a growing total population size and applying functional limit theorems from stochastic process theory, diffusion approximations are derived. These describe the dynamics of epidemics through systems of stochastic differential equations. Finally, time discretization leads to stochastic recursions, which are used in the further course of the article. They are modified by dividing the state vector into an observable and an unobservable or hidden component leading to a model with partial information. 

In Subsection \ref{sec:BaseModel}, we propose a mathematical model for the dynamics of the \covid pandemic. It takes into account important features that have been observed during the pandemic. One of these is the high proportion of asymptomatic patients and the low testing rate in many countries. Another feature is the category of people who develop symptoms of the disease and are unofficially confirmed as positive (e.g., by rapid tests from local stores) but refuse to go to a testing center for official confirmation for various reasons. The model developed is referred to as the ``base model''. 

Among the hidden states, there are some ``partially hidden states'' that describe the subpopulation in compartments with an observable inflow representing the recovery of infected or vaccinated individuals. They are considered fully immune, but only for a certain period of time. However, for \covidc, it is known that after this period, immunity gradually wanes over time and individuals must be considered susceptible again. Since this transition is not reported, the outflow from these compartments remains hidden. To capture the information about the observable inflow, the basic model in Subsection \ref{sec:ExtendedModel} is refined into an ``extended model'' that includes additional cascades of new states that take into account the time elapsed since recovery or vaccination.

\subsection{Stochastic Epidemic Models with Partial Information}\label{sec:epidemic_model}
The setting is based  on \cite{kamkumo2025stochastic}. For self-containedness  and the convenience of the reader,   the most important components of the model are briefly summarized in this subsection. 
We consider a compartmental epidemic model in which a population of constant size $N\in\N$ is decomposed into $d\in\N$ subpopulations, which form the compartments, and $K\in\N$ transitions between the compartments.  
Let $T>0$ be a fixed horizon time, and $(\Omega,\mathcal{F},\mathbb{F},\mathbb{P})$ be a filtered  probability space  with the filtration $\mathbb{F}=\{\mathcal{F}_{t}\}_{t\in[0,T]}$, a family of $\sigma$-algebras with  $\mathcal{F}_s\subset \mathcal{F}_t\subset\mathcal{F}$ for $0\le s< t\le T$. Further, let $X=(X(t))_{[0,T]}$ be  a stochastic process  with values in $\{0,\ldots,N\}^{d}$, where  $X_{i}(t)\in \{0,\ldots,N\}$ denotes the absolute subpopulation size of   compartment $i=1,\ldots,d$ at time $t$.  The filtration $\mathbb{F}$ is assumed to be generated by $X$, that is the $\sigma$-algebras $\mathcal{F}_{t}=\sigma\{X(s),s\le t\}$  model the information available from observing $X$ in $[0,t]$.

\paragraph{Microscopic Models}
We denote by $\Count_{k}(t)$ the total number of transitions of type  $k=1,\ldots,K$, in the time interval $[0,t]$. 
Given these counting processes, the dynamics of $X$ can be expressed as
\begin{align}\label{stateX}
	X(t)=X(0)+\sum\limits_{k=1}^{K}\xi_{k} \, \Count_{k}(t), 
\end{align}
with the $d$-dimensional  transition vectors $\xi_k$, $k=1,\ldots,K,$ defined as the increment of the state process $\xi_k=X(t)-X(t-)$, if transition $k$ occurs at time $t$. Here, $X(t-)=\lim_{s\uparrow t} X(s)$ denotes the left limit of $X$ at time $t$,  that is the state immediately before the transition. Under the natural assumption that, during a transition, at most one individual moves from one compartment to another, the entries of these vectors only take the values $0$, $+1$ and $-1$, and it holds 
\begin{align}
	\xi_{k}^{i} & =\left\{\begin{array}{%
			l @{\qquad}  l}
		-1 & \text{if an individual leaves compartment $i$,}\\
		+1 & \text{if an individual enters compartment $i$,}\\
		\phantom{-}0& \text{otherwise.}\\
	\end{array}\right.
\end{align}
Another natural assumption, which is fulfilled in many applications, is that the transitions between the compartments are independent of each other. Thus, the counting processes $\Count_1,\ldots,\Count_K$ are assumed to be independent. Further, they are modeled as Poisson processes with intensities $\lambda_{k}=\lambda_{k}(t,X(t))$ depending  on time $t$ and the current state $X(t)$, as in    \cite{guy2015approximation,guy2016approximation,hasan2022new}.
Let $\Pi_1,\ldots,\Pi_K$ be independent standard Poisson processes with unit intensity, and consider the time change $\TimCh_{k}(t)=\int_{0}^{t}\PoisRate_{k}(s,X(s))ds$. Then, the non-homogeneous Poisson counting  processes $\Count_{k}$ can be expressed in term of the standard Poisson processes  $\Pi_{k}$  as  $\Count_{k}(t)=\Pi_{k}(\TimCh_{k}(t))$.
Thus, the state dynamics given in \eqref{stateX} can be expressed as
\begin{align}\label{stateX_Poisson}
	X(t)=X(0)+\sum_{k=1}^{K}\xi_{k} \Pi_{k}\Big(\int\nolimits_{0}^{t}\lambda_{k}(s,X(s))ds\Big).
\end{align}
The above  dynamics of the state process is refereed in the literature as continuous-time Markov chain (CTMC).

\paragraph{Macroscopic Models}
For large population sizes $N$ the study of the asymptotic behavior of the CTMC dynamics given in \eqref{stateX_Poisson} and the application of a functional  law of large numbers and  central limit theorem, see Britton and Pardoux ~\cite[Chapter 2, Section 2.2-2.3]{BrittonPardoux2019}, Anderson and Kurtz \cite[Chapter 1, Section 3.2]{anderson2011continuous}, Ethier and Kurtz \cite[Chapter 4, Section 7]{ethier2009markov},  Guy et al.~\cite{guy2015approximation}, and \cite[Section 3.4]{kamkumo2025stochastic}, results in the so-called diffusion approximation of $X$ by a diffusion process $X^D$ which solves the following system of stochastic differential equations (SDEs) 
\begin{align}\label{DA1}
	dX^D(t) & =\driftX(t,X^D(t))dt+\diffX(t,X^D(t))dW(t), \qquad  X^D(0)=X(0)=x_0,
\end{align}
driven by $K$-dimensional  standard  Brownian motion $W$. The drift and diffusion coefficient $f_X$ and $\sigma_X$  are given by 
\begin{align}\label{DA_drift_diffusion}
	\begin{split}
		\driftX(t,x) & =\sum_{k=1}^K \xi_k \lambda_k (t,x) \quad\text{and}\quad 
		\diffX(t,x) =\big(  \xi_1 \sqrt{\lambda_1(t,x)},\ldots,\xi_K \sqrt{\lambda_K(t,x)}\big).		
	\end{split}	
\end{align}
For the filtering approach, it is convenient to work with a discrete-time approximation of the state dynamics because the observable information (e.g., reported cases or compartment counts) is typically available at discrete time points (e.g., daily, weekly). We therefore divide the time interval  $[0,T]$ into $N_t\in \mathbb{N}$ uniformly spaced subintervals of length $\Delta t = {T}/{N_t}$ and define the time grid points  $t_n = n \Delta t$ for $n =0,\ldots,N_t$. 
Discretizing  SDE \eqref{DA_drift_diffusion} using the Euler-Maruyama scheme yields a stochastic recursion of the form:
\begin{align}\label{Discrete_Dyn1}
X^{D}_{n+1}&=X^{D}_n +  \driftX(n,X^{D}_{n})\Delta t + \diffX(n,X^{D}_{n})\sqrt{\Delta t}\,\Noise_{n+1}. 
\end{align} 
Here, $X^{D}_n$ denotes the discrete-time approximation of the state $X^D(t_n)$ at time $t_n$. Further, $(\Noise_n)_{n=1,\ldots, N_{t}}$ is a sequence of independent standard normally distributed Gaussian vectors, $\Noise_n\sim \mathcal{N}(0_{K},\one_{K})$.

\paragraph{Models with Hidden States} 
We now want to distinguish between observable and hidden states and assume that among the $d$-states there are $d_1$-states that are hidden, where  $d_1\in \N, d_1<d$. That is, their actual subpopulation size is not directly observable, but can only be estimated from the observable states. The latter are the remaining $d_2=d-d_1$ states. The $d$-dimensional state vector $X^{D}$ is then decomposed into a $d_1$-dimensional vector $Y$, whose entries are the hidden states, and a $d_2$-dimensional vector $Z$, which contains the observable states. The state dynamics in \eqref{Discrete_Dyn1} can then be rewritten in form of the following system of recursions 
\begin{align}\label{state_YZ}
	\begin{split}
		Y_{n+1}&= f(n,Y_n,Z_n) +~ \sigma (n,Y_n,Z_n){\Noise^1_{n+1}} 
		+ g(n,Y_n,Z_n)  \Noise^2_{n+1}\\[1ex]
		{Z_{n+1}}&=   h(n,Y_n, Z_n) + ~\ell(n,Y_n,Z_n){\Noise^2_{n+1}}
	\end{split}	
\end{align}
where $(\Noise^1_n)$ and  $(\Noise^2_n)$ are independent sequences of i.i.d.~$\mathcal{N}(0_{k_1},\one_{k_1})$ and $\mathcal{N}(0_{k_2},\one_{k_2})$ random vectors, respectively, so that $\Noise^1_n$ contains those $k_1\in\{1,\ldots,K-1\}$ entries of $\Noise_n$, that only appear in the recursion for $Y$, while $\Noise^2_n$ collects the $k_2=K-k_1$ entries of $\Noise_n$ that appear in the recursion for both $Y$ and $Z$. In the following two subsections and in Appendix \ref{app:coeff}, we specify the specific form of the coefficients $f,h,\sigma,g,\ell$ for two examples of epidemic models.

\subsection{Base \covid  Model}\label{sec:BaseModel}
We now propose a stochastic model of  the dynamics of the \covid pandemic that takes into account unobservable states.
This is motivated on the one hand  by the high proportion of asymptomatic patients, and the low rate of use of tests in many countries, and on the other hand by  individuals who develop symptoms of the disease and are confirmed to be positive for the disease unofficially (e.g.~through rapid tests obtained from local stores), but refuse to go to a testing center to confirm officially whether or not they are infected. The latter is thus a crucial problem in the fight to eradicate the disease, because if their numbers were to increase, we would be facing an outbreak of infectious individuals, which would compromise the government's efforts to eradicate the disease. We start in this subsection with  a ``base model''. It will be refined in Subsection \ref{sec:ExtendedModel} to account for so-called ``partially hidden states'' and that enables to capture more of the available  information contained in the time elapsed since recovery or vaccination.

\begin{figure}[h]
	\centering
	\includegraphics[width=10cm,height=7cm]{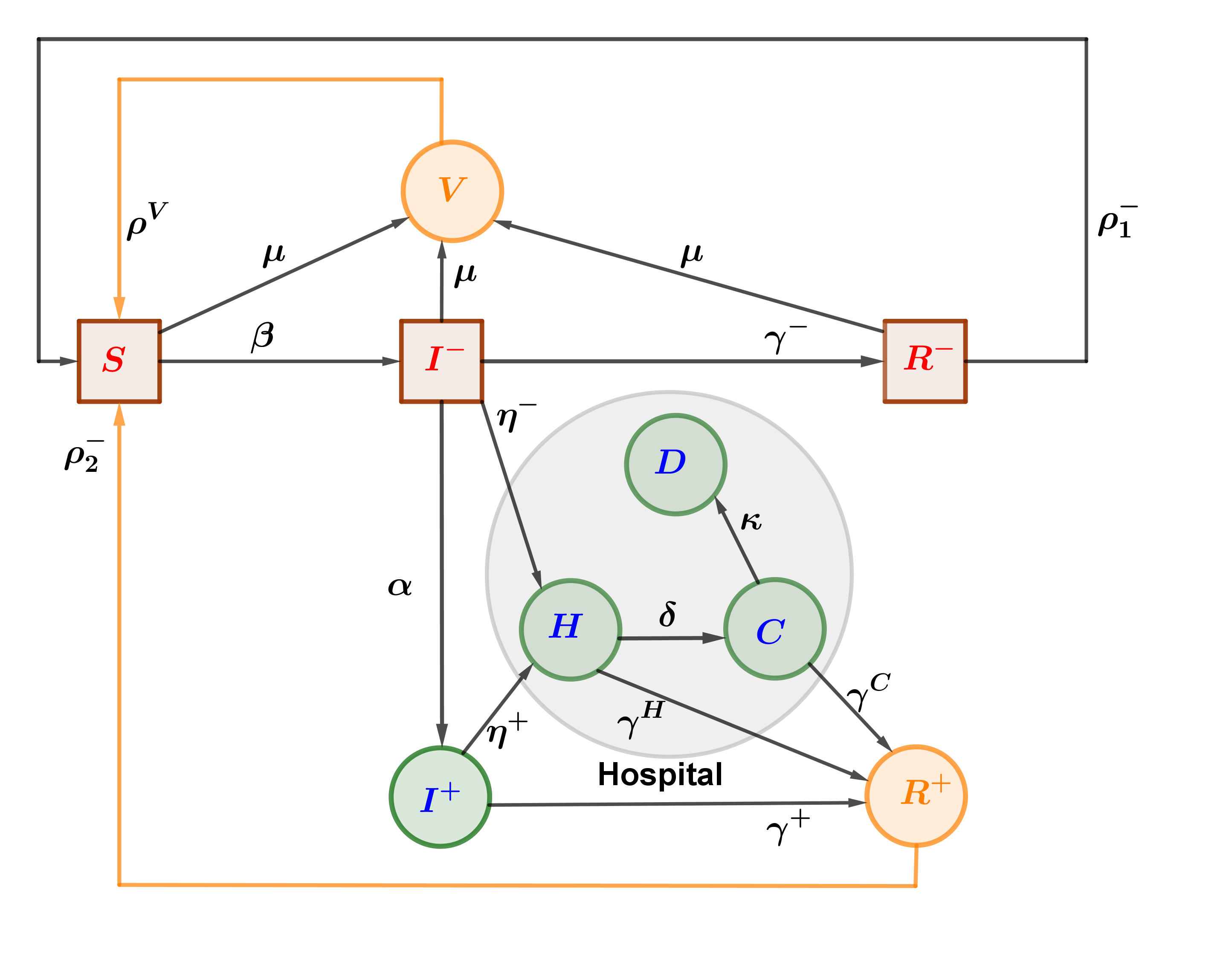}
	\caption{ Base \covid model with partial information consisting of of three fully hidden states \(I^-, R^-, S\), two partially hidden states \(R^+, V\), and four observable states \(I^+, H, C, D\).}
	\label{ch4_Model SIRpm_7}
\end{figure}

Starting point of the model depicted in Figure \ref{ch4_Model SIRpm_7} is the celebrated SIRS model that divides the population into three compartments containing  susceptible ($S$), infected ($I$) and recovered ($R$) individuals, with transitions from $S$ to $I$ (infection), $I$ to $R$ (recovery), and $R$ to $S$ (losing immunity). Here, we divide the infected compartment  $I $ into two. First,  the compartment  $I^{-}$ collects the  undetected infected individuals, most often asymptomatic but contagious, whereas $I^{+}$ contains the detected  infected individuals. 
When an infected individual  is tested positive, it transitions  from $I^{-}$ to $I^{+}$. Additionally, we also split the recovered compartment into two different compartments. First,  $R^{-}$ contains the undetected recovered,  and  $R^{+}$ the detected recovered. Here, recovered individuals from $I^{-}$ transition to $R^{-}$, while $R^{+}$ collects those who recovered from $I^{+}$.  Since recovery from \covid does not confer lifelong immunity, people in $R^-$ and $R^+$ can lose their immunity and transition to $S$, i.e. they become susceptible again.

\begin{table}[!h]
	
	\begin{center}
		{\footnotesize 
			\setlength{\tabcolsep}{5pt}
			\begin{tabular}{l|l|c|c}
				\hline
				k & Transition  &Transition vectors $\xi_{k}^\top$& Intensity $\lambda_{i}(t,X)$  \rule{0pt}{2.5ex}\\ 
				\hline
				&&&\\[-1em]
				1& Infection of  susceptible   & $(1,0,0,0,-1,0,0,0,0)$ &  $\beta S\frac{I^{-}}{N}=\beta Y^{5}\frac{Y^{1}}{N}$ 
				\\
				\hline
				&&&\\[-1em]
				2& Test of infected undetected   & $(-1,0,0,0,0,1,0,0,0)$ &  $\alpha I^{-}=\alpha Y^1$ 
				\\
				\hline
				&&&\\[-1em]
				3 & Recovering of infected detected   & $(0,0,1,0,0,-1,0,0,0)$ &  $\gamma^{+}I^{+} = \gamma^{+}Z^{1}$ 
				\\ 
				\hline
				&&&\\[-1em]
				4 & Recovering of  infected undetected   & $(-1,1,0,0,0,0,0,0,0)$ &  $\gamma^{-}I^{-} = \gamma^{-}Y^{1}$
				\\ 
				\hline
				&&&\\[-1em]
				5 & Losing immunity of detected recovered  & $(0,0,-1,1,0,0,0,0,0)$ &  $\rho^{-}_{2}R^{+} = \rho^{-}_{2}Y^{3}$ 
				\\ 
				\hline
				&&&\\[-1em]
				6 & Losing immunity of undetected recovered  & $(0,-1,0,1,0,0,0,0,0)$ &  $\rho^{-}_{1}R^{-} = \rho^{-}_{1}Y^{2}$ 
				\\ 
				\hline
				&&&\\[-1em]
				7 & Losing immunity of vaccinated   & $(0,0,0,-1,1,0,0,0,0)$ &  $\rho^{V}V = \rho^{V}Y^4$ 
				\\
				\hline
				&&&\\[-1em]
				8 & Vaccination of undetected infected  & $(-1,0,0,0,1,0,0,0,0)$ &  $\mu I^{-} = \mu Y^{1}$ 
				\\ 
				\hline
				&&&\\[-1em]
				9 & Vaccination of undetected recovered  & $(0,-1,0,0,1,0,0,0,0)$ &  $\mu R^{-} = \mu Y^{2}$ 
				\\ 
				\hline
				&&&\\[-1em]
				10 & Vaccination of susceptible  & $(0,0,0,1,-1,0,0,0,0)$ &  $\mu S = \mu Y^{5}$ 
				\\ 
				\hline
				&&&\\[-1em]
				11 & Hospitalization of undetected infected  & $(-1,0,0,0,0,0,1,0,0)$ &  $\eta^{-} I^{-} = \eta^{-} Y^{1}$ 
				\\ 
				\hline
				&&&\\[-1em]
				12 & Hospitalization of detected infected  & $(0,0,0,0,0,-1,1,0,0)$ &  $\eta^{+} I^{+} = \eta^{+} Z^{1}$ 
				\\ 
				\hline
				&&&\\[-1em]
				13 & Recovering from Hospitalization   & $(0,0,1,0,0,0,-1,0,0)$ &  $\gamma^{H} H = \gamma^{H} Z^{2}$ 
				\\ 
				\hline
				&&&\\[-1em]
				14 & Recovering from ICU   & $(0,0,1,0,0,0,0,-1,0)$ &  $\gamma^{C} C = \gamma^{C} Z^{3}$ 
				\\ 
				\hline
				&&&\\[-1em]
				15 & Transfer to  ICU   & $(0,0,1,0,0,0,-1,0,0)$ &  $\delta H = \delta Z^{2}$ 
				\\ 
				\hline
				&&&\\[-1em]
				16 & Death   & $(0,0,1,0,0,0,0,-1,1)$ &  $\kappa C = \kappa Z^{3}$ 
				\\ 
				\hline
			\end{tabular}
		}
		\caption{Transition vectors and transition intensities of the base \covid model. The  state process 
			$X=\binom{Y}{Z}$ is decomposed into 
			$Y=(I^-,R^-,R^+,V,S)^{\top},$  $Z=(I^+,H,C,D)^{\top}$, the  total number of states is $ d= 9 $, and the number of  transitions is $ K = 16 $.}
		\label{Table_Info_Model1}
	\end{center}
\end{table}

An important feature of a pandemic such as \covid is the problem of ``flattening the curve'' as discussed in \cite{ferguson2020impact,qualls2017community}, which means that one of the objectives of choosing a public health intervention should be to avoid excessive in demand in the healthcare system, and in particular in intensive care units. Therefore, we incorporate  compartments representing interactions of individuals with the healthcare system and accounting of their vaccination status.
First, we introduce the compartments  $H$ of hospitalized individuals, and $C$ for individuals requiring treatment in intensive care units (ICUs).  The $C$ compartment captures critically ill patients who require ICU support, reflecting the strain on critical care resources. Since recovery of individuals in $H$ and $C$ is reported, those individuals  transition to $R^+$. The $D$ compartment comprises the individuals who die from the disease. We assume here for simplicity that deaths due to \covid occur exclusively among individuals in the $C$ compartment.
Second, due to the availability of vaccines, we include the vaccinated compartment $V$.  It aggregates all individuals who have received at least one dose of the vaccine, regardless of the number of doses or the vaccine type administered. Individuals who receive the vaccine are  susceptible ($S$), infected but not detected, or asymptomatic ($I^{-}$),  or recovered but not detected ($R^{-}$). As none of the available \covid vaccines  offers lifelong immunity, individuals in $V$   may lose their immunity over time and become susceptible, that is they transition to $S$.

The diagram in Figure \ref{ch4_Model SIRpm_7} illustrates all possible transitions within the base model.
Hidden states are contained in the vector  $Y=(I^-,R^-,R^+,V,S)^{\top}$, and the observable states in   $Z=(I^+,H,C,D)^{\top}$. The total number of states is $ d= 9 $ and the total number of transitions is $ K = 16 $. Note that the hidden states are divided into the ``fully hidden states'' $I^-,R^-,S$, in which both the inflow and the outflow are not observable, and the ``partially hidden states'' $R^+,V$, in which the inflow is observable but not the outflow. A refined model in Subsection \ref{sec:ExtendedModel} shows how the information from the observable inflow can be made available for an improved estimation of the fully hidden states. 

The dynamics of this model can be described using the CTMC approach as in \eqref{stateX} and \eqref{stateX_Poisson}, where the transition vectors and intensities of the Poisson counting processes are given in Table \ref{Table_Info_Model1}. 
Studying the asymptotic behavior of this model for large populations leads to the diffusion approximation in form of a system of SDEs as in \eqref{DA1}, and time-discretization to the system of recursions  \eqref{state_YZ} for the states $Y,Z$ for which we provide the coefficients $f,h,g,\sigma,\ell$ in Appendix \ref{app:base_coeff}.

\subsection{Extended \covid Model with Cascade States}\label{sec:ExtendedModel}
We now consider the inclusion of partially hidden compartments. In the context of \covidc, these compartments contain  vaccinated individuals, and individuals who have recovered following quarantine, hospitalization, or ICU care. These compartments are characterized by an observable inflow of individuals, since vaccination and recovery  is reported, but a unobservable outflow due to the lost immunity and a transition to the susceptible compartment. The latter is usually not reported.
	For many diseases, in particular for \covidc, it is known that vaccination and recovery provides full immunity only for a certain known period of time after recovery, followed by a phase of waning immunity and eventually complete loss of immunity, at which point individuals must be considered susceptible.
	To incorporate the information about full immunity into the model and use it to estimate the dark figures, the idea is to take into account the ``vaccination age'' and ``recovery age'', that is the elapsed time since vaccination and  recovery, respectively. Further, we introduce subcompartments that comprises individuals with the same vaccination and recovery age. To do this, we assume that the vaccination and  recovery ages are  multiples of $\Delta t$ and that full immunity lasts for a period of time of length $\LV\Delta t$ after vaccination, and $\LR\Delta t$ after a recovery,  with some given  $\LV,\LR\in\N$. Further,  a sequence or ``cascade`` of new compartments is introduced that comprise individuals of the same vaccination and  recovery age as depicted in Figure \ref{ch4_Model SIRpm_1016}. 
	
	\begin{figure}[h]
		\centering
		\includegraphics[width=11cm,height=6cm]{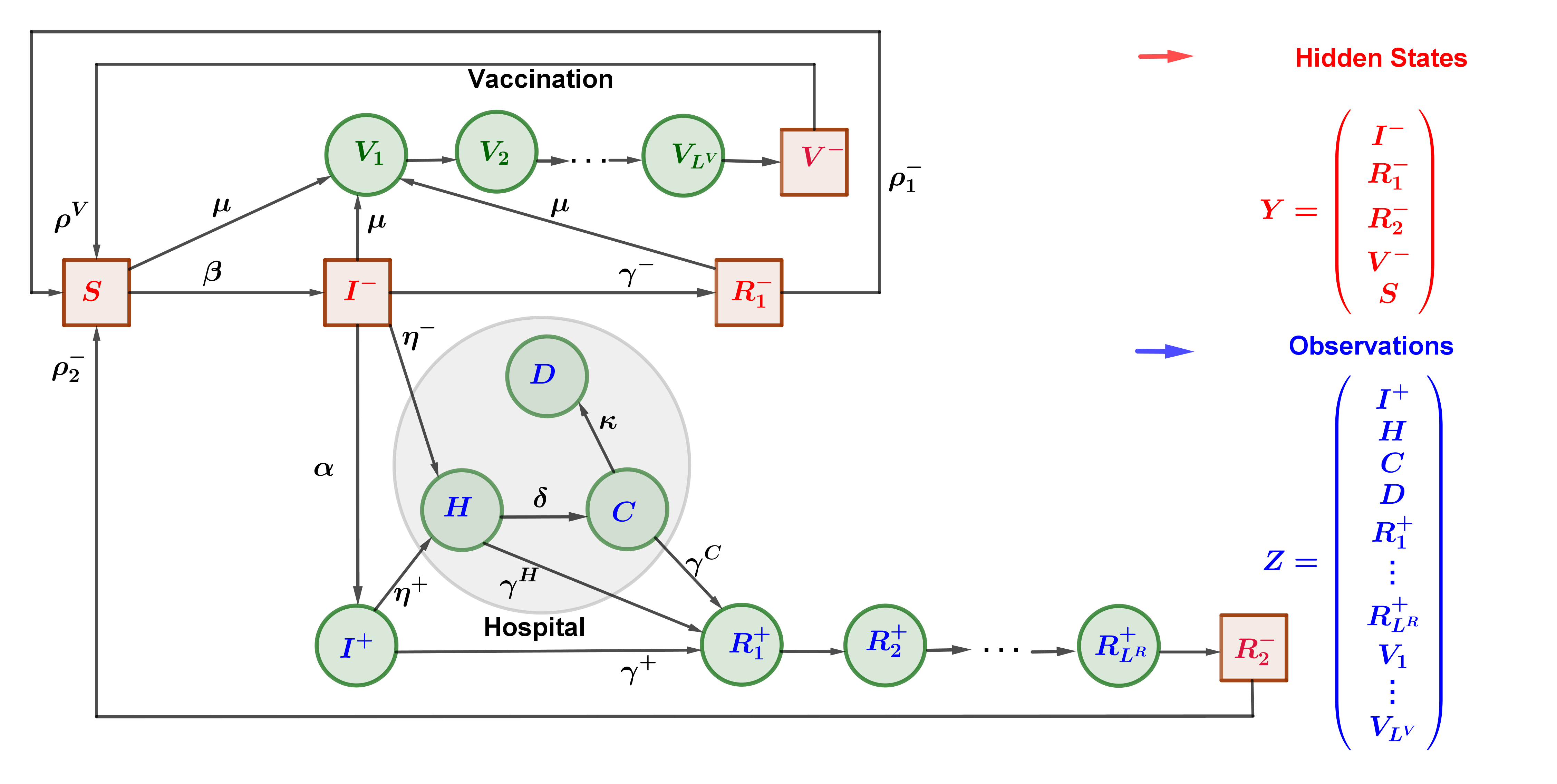}
		\caption{Extended Covid-19 model with partial information consisting of the 5 hidden states \(I^-, R^{-}_{1}, R^{-}_{2}, V^{-}, S\) and \(\KR + \KV + 4\) observable states. The four  observable states \(I^{+}, H, C, D\) were already present in the base \covid model depicted in  Figure \ref{ch4_Model SIRpm_7},  while $R^{+}_{1},\ldots, R^{+}_{\KR} $ and $V_{1},\ldots,C_{\KV}$ are cascade states.}
		\label{ch4_Model SIRpm_1016}
	\end{figure} 
	
	The approach is now  explained in more detail for the compartment $V$ of the vaccinated individuals. The procedure for $R^+$ is analogous.
	The compartment $V$ of the \covid base model is divided into the $\LV$ compartments $V_1,\ldots,V_{\LV}$ and $V^-$, where $V_j$ includes individuals with full immunity at vaccination age $j\Delta t$, $j=1,\ldots,\LV$, while $V^-$ includes individuals with a vaccination age exceeding $\LV\Delta t$ who no longer enjoy full immunity.
	The transition between the compartments  $V_1,\ldots,V_{\LV}$ is fully deterministic following the dynamics $V_{j,{n+1}}=V_{j-1,{n}}$ for $j=2,\ldots,\LV$. In the  first of the cascade compartments, the subpopulation size $V_{1,n+1}$ equals the random but observable inflow of vaccinated  $\Vin_n$ arriving from the compartments $S,I^-,R^-$  during the period  $(t_n,t_{n+1}]$. Thus it holds $V_{1,n+1}  =   \Vin_n$ where the inflow is given by $ \Vin_n = \mu(S_n+I^-_n+R^-_{1,n})\Delta t$.   Note that although $S,I^-,R^-_1$ are not observable, the inflow $\Vin$ is observable because vaccinations are reported. This information is recorded in $V_1$.

	The size of the subpopulation of compartment $V^-$ is hidden because a complete loss of immunity  and the transition of individuals to the susceptible compartment  cannot be observed. The dynamics reads $V^-_{n+1}=V^-_{n}+ V_{\LV,n} - \Vout_n$. Here,  $\Vout_n$ denotes the unobservable outflow of individuals that lose immunity and transition to $S$ during the period  $(t_n,t_{n+1}]$. It is given by 	$\Vout = \rho^V V^-_n \Delta t.$
	
	For large $\LV,\LR$ the above approach suffers from an  excessive number of new compartments which are added to the base \covid model.  Since the transitions between these cascade compartments are deterministic, observations of the associated subpopulation sizes provide little or no information for estimating the population sizes in the hidden compartments.  Only the first compartment $V_1$ receives a random inflow that carries information about the unobservable population sizes in $S,I^-$ and $R^-$. Therefore it appears reasonable to reduce the number $\LV$ of cascade compartments by combining adjacent compartments into one compartment. This results in some loss of information that could be gained from the observations, but allows for a smaller number of $\KV\le \LV$ additional compartments. Let $\PV_1,\ldots,\PV_{\KV}\in \N$ be the number of original cascade compartments that were combined into one compartment, with $\PV_1+\ldots+\PV_{\KV}=\LV$. Then $V_1$ comprises persons of vaccination age $1,\ldots,\PV_1$, $V_2$ comprises persons of vaccination age $\PV_1+1,\ldots,\PV_1+\PV_2$, and so on. In the aggregated compartments, individuals can no longer be distinguished based on their vaccination age. Therefore, the transition dynamics must be modified. We use an approximation based on the assumption that in compartment $V_j$ at each time $t_n$, a fraction of $\psiV_j=1/\PV_j$, $j=1,\ldots,\KV$, passes into the next cascade compartment. This approximation is accurate if the vaccination ages in $V_j$ are uniformly distributed, for example during a stationary phase of the epidemic. However, it may lead to some errors if there are rapid changes in the course of the epidemic. The above approach implies the following recursions 
	\begin{align}
		\label{cascade_dynamics_V}
		\begin{split}
		V_{1,n+1} & =  (1-\psiV_1)V_{1,{n}} +  \Vin_n ,\\	
		V_{j,{n+1}}& = (1-\psiV_j)V_{j,{n}} + \psiV_{j-1} V_{j-1,{n}},\qquad\quad  \text{for } j=2,\ldots,\KV,\\
		V^-_{n+1} &  = 	V^-_{n} + \psiV_{\KV} \;V_{\KV,{n}} -  \Vout_n. 
	\end{split}	
	\end{align}

	Similarly, the partially hidden compartment $R^+$ of the base \covid model can be first  divided into $\LR$ observable cascade compartments $R^+_1,\ldots,R^+_{\LR}$ where $R^+_j$ comprises  individuals of recovery age $j\Delta t$. The last cascade compartment $R^+_{\LV}$ is  followed  by an unobservable compartment $R_2^-$, including individuals with fading immunity and a recovery age greater than $\LR\Delta t$, see Figure \ref{ch4_Model SIRpm_1016}. Note that in the extended model we use the notation $R_1^-$ for the former $R^-$ compartment of the base model in order to distinguish from the new hidden compartment $R_2^-$. Here, it is assumed that after recovery from $I^+,H,C$, the full immunity lasts for  $\LR \Delta t$ units of time. Furthermore, the originally $\LR$ cascade compartments  are combined to $\KR\le \LR$ compartments by grouping $\PR_1,\ldots,\PR_{\KR}\in\N$ adjacent compartments, where  $\PR_1+\ldots+\PR_{\KR}=\LR$. 
	As above in \eqref{cascade_dynamics_V} we can the express the dynamics of this type of cascade states by the recursions
	\begin{align}
		\label{cascade_dynamics_R}
		\begin{split}
			R_{1,n+1} & =  (1-\psiR_1)R_{1,{n}} + \Rin_n ,\\	
			R_{j,{n+1}}& = (1-\psiR_j)R_{j,{n}} + \psiR_{j-1} R_{j-1,{n}},\qquad\quad  \text{for } j=2,\ldots,\KR,\\
			R^-_{2,n+1} &  = 	R^-_{2,n} + \psiR_{\KR} \;R_{\KR,{n}} -  \Rout_n. 
		\end{split}	
	\end{align}	
		Here, $\Rin_n$ denotes the observable inflow of recovered individuals and $\Rout_n$ the unobservable outflow of individuals that lose immunity and transition to $S$ during the period  $(t_n,t_{n+1}]$. They are given by 
		$\Rin_n= (\gamma^+ I^+_n + \gamma^H H_n + \gamma^C C_n)\Delta t\text{ and } \Rout = \rho^-_2 R^-_{2,n} \Delta t.	$

	The dynamics of the cascade compartments given in \eqref{cascade_dynamics_V} and \eqref{cascade_dynamics_R} can be incorporated into the system of recursions  \eqref{state_YZ} for the discrete-time dynamics of the state vectors  $Y,Z$ given in Figure \ref{ch4_Model SIRpm_1016}. In Appendix \ref{app:ext_coeff} we give  the coefficients of these recursions for the case $\KV=\KR=3$. This model is used for the numerical experiments reported below in Section \ref{sec:application} and \ref{sec:NumericalResults}.	
	For further details about the modeling with cascade states and including such compartments in continuous-time CTMC models  we refer to  \cite[Section 4.5]{kamkumo2025stochastic}.

\section{ Estimation of Unobservable States} % Filtering Problem
\label{sec:filter}

In filtering theory, the objective is to estimate   at time $n$ the hidden state $Y_n$ of a dynamic system such as given in  \eqref{state_YZ}, based on the available past and current   observations $Z_0,\ldots,Z_n$,  and prior information about the distribution of the initial value of the hidden state. 
The goal is to determine the  mean-square optimal estimate of \( Y_n \).  
Mathematically, this translates into computing the conditional expectation  
\begin{align}
	\label{def_condmean}
	\condmean_n= \mathbb{E} \left[ Y_n \vert \mathcal{F}^Z_n  \right],
\end{align}
where the filtration \( \mathcal{F}^Z_n = \sigma \{ Z_k, k=0,\ldots,n \} \vee \Fprior \) represents the available information up to time \( n \), including prior knowledge about the distribution of the number of individuals in the hidden compartments, encoded in \( \Fprior \). Such prior information could originate from epidemiological reports, historical data, or expert assessments. The estimate \( \condmean_n \) is optimal in the mean-square sense, meaning it minimizes the expectation of the squared error   \(\mathbb{E} [\norm{Y_n - \hat{Y}_n}^2 ]\)
over all possible estimators \( \hat{Y}_n \) that are measurable with respect to \( \mathcal{F}^Z_n \). The accuracy of this estimate is described by the conditional covariance matrix 
\begin{align}
	\label{def_condvar}
	\condvar_n := \operatorname{Var}(Y_n|\mathcal{F}^Z_n) = \mathbb{E}[(Y_n - \condmean_n)(Y_n - \condmean_n)^\top \vert \mathcal{F}^Z_n ],
\end{align}
which quantifies the remaining uncertainty in the estimation of the hidden state.  The filter processes start at time $n=0$ with  the initial estimates  
\begin{align}
\label{def_filter_ini}
\condmean_0 = m_0 = \mathbb{E}[Y_0|\mathcal{F}^Z_0], \quad \text{and} \quad \condvar_0 = q_0 = \var(Y_0|\mathcal{F}^Z_0),
\end{align}
which are based on the prior information in $\Fprior$ and the first observation $Z_0$.

 A major challenge in the epidemic models considered above  is the nonlinearity present in both the drift $f$ of the hidden state process and the diffusion terms $g,\sigma,\ell$ governing the dynamics of the signal and observations. Unlike linear Gaussian systems, where Kalman filtering provides exact recursive solutions, such nonlinearity requires approximate filtering methods, such as the extended Kalman filter described in Subsections \ref{sec:EKF} and \ref{ACGS}. 
 It is based on the results of the next subsection.

\subsection {Kalman Filter for Conditionally Gaussian Sequences}\label{Sect_Cond_gaussian}
Let us consider the partially observable random sequence $\binom{Y_n}{Z_n}$ where $Y_{n}\in \R^{d_1}$ and $Z_{n}\in\R^{d_2}$ with   $n= 0,\ldots,N_t$, $N_t\in\N$, and $d_1,d_2\in \N$,  defined by the following stochastic recursions
\begin{align}
\begin{split}
	\label{kfcgm2}
	Y_{n+1} &=  \cf_0(\Zpathn)+\cf_1(\Zpathn) {Y_{n}}  +  \; \csigma(\Zpathn)\, \Noise^1_{n+1}+  \;\cg(\Zpathn)\,\Noise^2_{n+1},\\ 
	Z_{n+1}&=  \ch_0(\Zpathn)+\ch_1(\Zpathn) {Y_{n} } +\;\cell(\Zpathn)\,\Noise^2_{n+1}.
\end{split}	
\end{align}
for $n=0,\ldots,N_t-1$ and  initial values $Y_0,Z_0$. Here, $\Zpathn=(Z_0,\ldots,Z_n)$ denotes the path of the observations until time $n$, and $(\Noise^1_{n})_{n=1,\ldots,N_t}$,   $(\Noise^2_{n})_{n=1,\ldots,N_t}$ are independent sequences of standard normally distributed random vectors of dimension $k_1,k_2\in\N$, respectively.   The initial values $Y_0,Z_0$ are assumed to be independent of $(\Noise^1_{n}),(\Noise^2_{n})$. 
The vector-valued  functions $\cf_0,\ch_0$ and the matrix-valued functions $\cf_1,\ch_1,\csigma,\cg,\cell$   defined on $\R^{n+1}$ are measurable functions of  $\Zpathn$ such that the matrix products in \eqref{kfcgm2} are well-defined and of appropriate dimensions, 
and the following assumptions are satisfied.
\begin{assumption}~	\label{ass_filter}
	\begin{enumerate}[leftmargin=2.5em, itemsep=0.2ex]
		\renewcommand{\theenumi}{(A\arabic{enumi})}
		\renewcommand{\labelenumi}{\theenumi}
		\item\label{kfcgm_A3} 
		Let $\cb=\widetilde{b}(\Zpathn)$ be  any of the functions $\cf_0,\ch_0,\cf_1,\ch_1,\csigma,\cg,\cell$, then for all $ n=0,\ldots,N_t-1$
		$$\mathbb{E}[\norm{\cb(\Zpathn)}^2]<\infty.$$ 
		%Here,  $\norm{x}=\big(\sum\nolimits_{i} x^{2}_{i}\big)^{1/2}$ denotes the Euclidean norm of  a vector $x$, and  $\norm{A}=\big(\sum\nolimits_{i,j} A_{ij}^{2}\big)^{1/2}$ the Frobenius norm of a matrix $A$.
		\item \label{kfcgm_A4}
		$\norm{\cf_1(\Zpathn)}$ and  $\norm{\ch_1(\Zpathn)}$ are $\mathbb{P}$-a.s. bounded for all $ n=0,\ldots,N_t-1$.
		\item \label{kfcgm_A5}
		$\mathbb{E}[\norm{Y_{0}}^2 + \norm{Z_{0}}^2]< \infty$.
		\item \label{kfcgm_A6}
		The conditional distribution $Y_{0}$ given $\mathcal{F}^{Z}_{0}$ is Gaussian.
	\end{enumerate}		
\end{assumption}
\begin{remark}
	Contrary to the system of equations \eqref{state_YZ} arising from modeling epidemics with not directly observable states, the drift term in the recursion for the signal sequence $(Y_n)$ in \eqref{kfcgm2} is linear in the signal while  in the base as well as extended \covid model the drift coefficient $f$ in \eqref{state_YZ} exhibits quadratic nonlinearities. Further, in \eqref{state_YZ}  the coefficients $\sigma,g,\ell$ scaling the random noise terms $\Noise_n^1,\Noise_n^2$ depend in a nonlinear way on the current values of the signal $Y_n$ and the observation $Z_n$, whereas in \eqref{kfcgm2} the corresponding coefficients $\cg,\csigma,\cell$ are independent of the signal but may depend on the whole observation path $\Zpathn$ until time $n$.
\end{remark}	
Given that the sequence $(Z_n)$ is observable while  $(Y_n)$ is unobservable, the filtering problem consists in constructing at each time $n=0,\ldots,N_t$ an estimate for the unobservable signal variable $Y_{n}$ based on the  observation path  $\Zpathn$ and the prior information encoded in $\Fprior$. As outlined in the introduction of this section, the signal estimate is desired in the form \eqref{def_condmean}, i.e., the conditional mean $M_{n}=\mathbb{E} [Y_{n}|\mathcal{F}_{n}^{Z}]$, accompanied by the conditional covariance $\condvar_n := \operatorname{Var}(Y_n|\mathcal{F}^Z_n) $, given in \eqref{def_condvar}, to quantify the estimation error. It is well-known that under Assumption \ref{ass_filter} the conditional mean $M_n$ is the mean-square optimal estimate of $Y_n$, see  \cite[Chapter 13]{LiptserShiryaevVolII2001}. Further,  $\trace(\mathbb{E}[Q_{n}])=\sum\nolimits_{i=1}^{d_{1}}\mathbb{E}[(Y_{n}^{i}-\condmean_{n}^{i})^{2}]$ yields the total estimation error.  

For a general recursive dynamics of the partially observable sequences $(Y_n),(Z_n)$, it is often quite tedious to determine the form of the conditional distribution of the signal $Y_n$ given the observation path $\Zpathn$ and its parameters $M_{n}$ and $Q_{n}$, and only possible using sophisticated numerical methods. However, for the dynamics given in \eqref{kfcgm2} with Gaussian drivers and initial conditions, and  drift terms that depend linearly on the hidden signal, closed-form solutions of the filtering problem in terms of recursions for  $(\condmean_{n})$ and $(\condvar_{n})$ become possible. They are based on the following result. 

\begin{theorem}[Liptser \& Shiryaev (2001)  \cite{LiptserShiryaevVolII2001}, Theorem 13.3]\label{thm_condGauss}\\
	Let Assumption  \ref{ass_filter} be satisfied. Then the sequence $\binom{Y_n}{Z_n}$ governed by \eqref{kfcgm2}  is conditionally Gaussian, i.e., the conditional distribution of $Y_n$ given $\mathcal{F}^Z_n$ is  is multivariate Gaussian  for any $n=0,1,\ldots,N_t$.
\end{theorem}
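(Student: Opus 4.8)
The plan is to argue by induction on $n$, exploiting the structural feature of \eqref{kfcgm2} that the coefficient functions $\cf_0,\cf_1,\ch_0,\ch_1,\csigma,\cg,\cell$ depend only on the observation path $\Zpathn$: conditioning on $\mathcal{F}^Z_n$ therefore turns each of them into an $\mathcal{F}^Z_n$-measurable, hence ``frozen'', coefficient. First, the base case $n=0$ is immediate, since the regular conditional distribution of $Y_0$ given $\mathcal{F}^Z_0$ is Gaussian by Assumption~\ref{kfcgm_A6}, with an $\mathcal{F}^Z_0$-measurable mean $\condmean_0$ and covariance $\condvar_0$ that are a.s.\ finite by \ref{kfcgm_A5}.

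For the inductive step, I would assume the regular conditional distribution of $Y_n$ given $\mathcal{F}^Z_n$ is $\mathcal{N}(\condmean_n,\condvar_n)$ with $\mathcal{F}^Z_n$-measurable, a.s.\ finite parameters. Since $(\Noise^1_{n+1},\Noise^2_{n+1})$ is independent of $\mathcal{F}^Z_n\vee\sigma(Y_n)$ --- the latter being generated by $Y_0,Z_0,\Fprior$ and the noise vectors up to time $n$ --- the triple $(Y_n,\Noise^1_{n+1},\Noise^2_{n+1})$ is, conditionally on $\mathcal{F}^Z_n$, Gaussian (its three blocks are conditionally independent, $Y_n$ being conditionally $\mathcal{N}(\condmean_n,\condvar_n)$ and each $\Noise^i_{n+1}$ retaining its $\mathcal{N}(0_{k_i},\one_{k_i})$ law). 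From \eqref{kfcgm2} one can then write
\[
\begin{pmatrix} Y_{n+1} \\ Z_{n+1} \end{pmatrix}
= \begin{pmatrix} \cf_0(\Zpathn) \\ \ch_0(\Zpathn) \end{pmatrix}
+ \begin{pmatrix} \cf_1(\Zpathn) \\ \ch_1(\Zpathn) \end{pmatrix} Y_n
+ \begin{pmatrix} \csigma(\Zpathn) & \cg(\Zpathn) \\ 0 & \cell(\Zpathn) \end{pmatrix}
\begin{pmatrix} \Noise^1_{n+1} \\ \Noise^2_{n+1} \end{pmatrix},
\]
which exhibits $\binom{Y_{n+1}}{Z_{n+1}}$ as an affine function of that triple with coefficients that are constant given $\mathcal{F}^Z_n$. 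Since an affine image of a Gaussian vector is Gaussian, $\binom{Y_{n+1}}{Z_{n+1}}$ is conditionally Gaussian given $\mathcal{F}^Z_n$; its conditional mean and covariance are obtained from $\condmean_n,\condvar_n$ and the frozen coefficients and are a.s.\ finite by \ref{kfcgm_A3}, \ref{kfcgm_A4} and the inductive hypothesis.

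It then remains to pass from $\mathcal{F}^Z_n$ to $\mathcal{F}^Z_{n+1}=\mathcal{F}^Z_n\vee\sigma(Z_{n+1})$, i.e.\ to condition the conditionally Gaussian pair $\binom{Y_{n+1}}{Z_{n+1}}$ additionally on $Z_{n+1}$. This is precisely the conditional form of the theorem on normal correlation \cite[Chapter~13]{LiptserShiryaevVolII2001}: if $\binom{U}{V}$ is Gaussian conditionally on a $\sigma$-algebra $\mathcal{G}$, then the conditional law of $U$ given $\mathcal{G}\vee\sigma(V)$ is again Gaussian, with a mean that depends affinely on $V$ (coefficients $\mathcal{G}$-measurable) and a $\mathcal{G}$-measurable covariance, the Moore--Penrose pseudo-inverse of the conditional covariance of $V$ being used in the degenerate case. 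Applying this with $\mathcal{G}=\mathcal{F}^Z_n$, $U=Y_{n+1}$, $V=Z_{n+1}$ yields that the conditional law of $Y_{n+1}$ given $\mathcal{F}^Z_{n+1}$ is Gaussian, with an $\mathcal{F}^Z_{n+1}$-measurable mean $\condmean_{n+1}$ and covariance $\condvar_{n+1}$ that are square-integrable by \ref{kfcgm_A3}. This closes the induction and establishes the claim for every $n=0,1,\ldots,N_t$.

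The conceptual core is the chain ``freeze the coefficients by conditioning on $\Zpathn$ $\Rightarrow$ conditional joint Gaussianity of $\binom{Y_{n+1}}{Z_{n+1}}$ given $\mathcal{F}^Z_n$ $\Rightarrow$ conditional normal correlation''. I expect the genuinely delicate points to be measure-theoretic rather than conceptual: making ``conditionally Gaussian'' precise through regular conditional distributions with $\mathcal{F}^Z_n$-measurable parameters --- which is where the measurability of the coefficient functions together with assumptions \ref{kfcgm_A3}--\ref{kfcgm_A5} is used --- and handling a possibly singular conditional covariance of $Z_{n+1}$ in the normal-correlation step via pseudo-inverses; neither affects the Gaussianity conclusion itself. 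An alternative that avoids regular-conditional-distribution technicalities, and simultaneously produces the filter recursions for $\condmean_n,\condvar_n$, is to prove by induction that $\mathbb{E}[\,e^{i\langle\lambda,Y_n\rangle}\mid\mathcal{F}^Z_n\,]=\exp\!\big(i\langle\lambda,\condmean_n\rangle-\tfrac12\lambda^{\top}\condvar_n\lambda\big)$ holds a.s.\ for every $\lambda\in\R^{d_1}$.
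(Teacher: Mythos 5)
Your proof is correct and follows exactly the route the paper points to: the paper does not reproduce the argument but cites Liptser \& Shiryaev, Theorem 13.3, noting only that it proceeds by induction on $n$ to establish normality of the conditional law of $Y_n$ given $\mathcal{F}^Z_n$. Your write-up is a faithful and accurate expansion of that induction --- base case from Assumption (A6), freezing of the $\Zpathn$-measurable coefficients, joint conditional Gaussianity of $\binom{Y_{n+1}}{Z_{n+1}}$ given $\mathcal{F}^Z_n$, and the conditional theorem on normal correlation (with pseudo-inverse in the degenerate case) to pass to $\mathcal{F}^Z_{n+1}$ --- so no further comparison is needed.
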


\begin{proof}
	A proof is given in \cite[Theorem 13.3]{LiptserShiryaevVolII2001}  and also in \cite{chen1989kalman}, and based on mathematical induction, where the authors  establish the normality of the conditional distribution of $Y_{n}$ given $ \mathcal{F}_{n}^{Z}$.  \qed
\end{proof}
The Gaussian nature of the sequences $(Y_n),(Z_n)$ enables to find the following  system  of recursive equations for the parameter sequences  $(\condmean_{n}), (\condvar_{n})$ of the conditional Gaussian distributions.
\begin{theorem}[Liptser \& Shiryaev (2001) \cite{LiptserShiryaevVolII2001}, Theorem 13.4]\label{Theo_EKF} \\ 
	Let Assumption  \ref{ass_filter} be satisfied and the sequences $({Y_n}),({Z_n})$ governed by \eqref{kfcgm2}. Then
	 the conditional distribution of \( Y_n \) given  \( \mathcal{F}^Z_n \) is the Gaussian distribution  $\mathcal{N}(\condmean_n,\condvar_n)$.	
	The conditional mean \( \condmean_n \) and conditional covariance \( \condvar_n \) are defined by the following recursions driven by the observations   
	\begin{align}		 
		\condmean_{n+1} =&   \cf_0 + \cf_1  \condmean_{n}   	+\\
		& \big( \cg\;\cell^\top + \cf_1 \condvar_{n} \ch_1^\top \big)  
		\big[\cell\;\cell^\top + h_1 \condvar_{n} \ch_1^\top\big]^{+}  		\big( Z_{n+1} - \big( \ch_0 + \ch_1 \condmean_{n} \big) \big),  \label{eq:EKF_mean_update} \\			
		\condvar_{n+1} = & -\big( \cg\;\cell^{\,\top} + \cf_1 \condvar_{n} \ch_1^\top \big)   
			\big[\cell\;\cell^\top + \ch_1 \condvar_{n} h_1^\top\big]^{+}  	\big( \cg\;\cell^\top+ \cf_1 \condvar_{n} \ch_1^\top  \big)^\top + \\
		& \cf_1 \condvar_{n} \cf_1^\top + \csigma\;\csigma^\top + \cg\;\cg^{\top}, 
		\label{eq:EKF_cov_update}
	\end{align}
	with the initial values \(\condmean_0 = m_0, \condvar_0 = q_0\).  All coefficient functions are evaluated at  $\Zpathn$.
	
\end{theorem}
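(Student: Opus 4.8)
The plan is to argue by induction on $n$, using Theorem \ref{thm_condGauss} together with the normal correlation theorem for jointly Gaussian vectors. The induction hypothesis is that the conditional law of $Y_n$ given $\mathcal{F}^Z_n$ is $\mathcal{N}(\condmean_n,\condvar_n)$; for $n=0$ this is exactly Assumption \ref{kfcgm_A6} with the initialisation \eqref{def_filter_ini}. For the step $n\to n+1$ I would first note that every coefficient function in \eqref{kfcgm2} is a measurable function of the path $\Zpathn$ and hence $\mathcal{F}^Z_n$-measurable, while $\Noise^1_{n+1}$ and $\Noise^2_{n+1}$ are Gaussian, mutually independent, and independent of $(Y_n,\mathcal{F}^Z_n)$. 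Consequently, conditionally on $\mathcal{F}^Z_n$ the triple $(Y_n,\Noise^1_{n+1},\Noise^2_{n+1})$ is Gaussian, and $\binom{Y_{n+1}}{Z_{n+1}}$ is an affine image of it with $\mathcal{F}^Z_n$-measurable coefficients, so $\binom{Y_{n+1}}{Z_{n+1}}$ is itself conditionally Gaussian given $\mathcal{F}^Z_n$.

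Next I would read off the parameters of this joint conditional law from \eqref{kfcgm2}. Taking $\mathcal{F}^Z_n$-conditional expectations gives $\mathbb{E}[Y_{n+1}\mid\mathcal{F}^Z_n]=\cf_0+\cf_1\condmean_n$ and $\mathbb{E}[Z_{n+1}\mid\mathcal{F}^Z_n]=\ch_0+\ch_1\condmean_n$, and computing conditional second moments with $\operatorname{Var}(Y_n\mid\mathcal{F}^Z_n)=\condvar_n$ and the stated independence and identity-covariance properties of the noise yields
\begin{align*}
	\operatorname{Var}(Y_{n+1}\mid\mathcal{F}^Z_n)&=\cf_1\condvar_n\cf_1^\top+\csigma\,\csigma^\top+\cg\,\cg^\top,\\
	\operatorname{Var}(Z_{n+1}\mid\mathcal{F}^Z_n)&=\ch_1\condvar_n\ch_1^\top+\cell\,\cell^\top,\\
	\operatorname{Cov}(Y_{n+1},Z_{n+1}\mid\mathcal{F}^Z_n)&=\cf_1\condvar_n\ch_1^\top+\cg\,\cell^\top.
\end{align*}
Finiteness of all these conditional moments follows from Assumptions \ref{kfcgm_A3} and \ref{kfcgm_A5}, and the boundedness in \ref{kfcgm_A4} keeps the recursion well posed.

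Then I would apply the normal correlation theorem, see \cite[Chapter 13]{LiptserShiryaevVolII2001} and also \cite{chen1989kalman}: for a conditionally jointly Gaussian pair, the conditional law of one block given the other is Gaussian, with mean and covariance given by the usual linear regression formulas, the inverse of the covariance of the conditioning block being replaced by its Moore--Penrose pseudoinverse when that covariance is singular. Applying this conditionally on $\mathcal{F}^Z_n$, with $Y_{n+1}$ the predicted block and $Z_{n+1}$ the conditioning block, and using $\mathcal{F}^Z_{n+1}=\mathcal{F}^Z_n\vee\sigma(Z_{n+1})$, gives
\begin{align*}
	\condmean_{n+1}&=\mathbb{E}[Y_{n+1}\mid\mathcal{F}^Z_n]+\operatorname{Cov}(Y_{n+1},Z_{n+1}\mid\mathcal{F}^Z_n)\big[\operatorname{Var}(Z_{n+1}\mid\mathcal{F}^Z_n)\big]^{+}\big(Z_{n+1}-\mathbb{E}[Z_{n+1}\mid\mathcal{F}^Z_n]\big),\\
	\condvar_{n+1}&=\operatorname{Var}(Y_{n+1}\mid\mathcal{F}^Z_n)-\operatorname{Cov}(Y_{n+1},Z_{n+1}\mid\mathcal{F}^Z_n)\big[\operatorname{Var}(Z_{n+1}\mid\mathcal{F}^Z_n)\big]^{+}\operatorname{Cov}(Y_{n+1},Z_{n+1}\mid\mathcal{F}^Z_n)^\top.
\end{align*}
Substituting the parameter formulas from the preceding paragraph turns these into \eqref{eq:EKF_mean_update} and \eqref{eq:EKF_cov_update}, and at the same time shows the conditional law at time $n+1$ is $\mathcal{N}(\condmean_{n+1},\condvar_{n+1})$, which closes the induction.

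The step I expect to be the main obstacle is making the normal correlation argument rigorous in the degenerate case. One must show that the innovation $Z_{n+1}-\mathbb{E}[Z_{n+1}\mid\mathcal{F}^Z_n]$ carries exactly the information added to $\mathcal{F}^Z_n$ by observing $Z_{n+1}$, i.e.\ that conditionally on $\mathcal{F}^Z_n$ the residual $Y_{n+1}-\condmean_{n+1}$ is uncorrelated with, hence (being jointly Gaussian) independent of, $Z_{n+1}$; and one must check that the pseudoinverse formulas do not depend on the choice of generalised inverse, i.e.\ that the relevant vectors lie almost surely in the range of $\operatorname{Var}(Z_{n+1}\mid\mathcal{F}^Z_n)$. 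Once these points are settled the remainder is routine bookkeeping with conditional moments, so in the paper I would mostly appeal to the established proof of Liptser and Shiryaev rather than reproduce it.
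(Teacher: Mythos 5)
Your proposal is correct and matches what the paper does: the paper's proof is simply a citation of \cite[Theorem 13.4]{LiptserShiryaevVolII2001} (and \cite{chen1989kalman}), and the argument you sketch — induction on $n$, conditional joint Gaussianity of $\binom{Y_{n+1}}{Z_{n+1}}$ given $\mathcal{F}^Z_n$, computation of the conditional first and second moments, and the normal correlation theorem with the Moore--Penrose pseudoinverse — is precisely the argument underlying that cited result. Your closing remark that you would ultimately appeal to Liptser and Shiryaev rather than reproduce the degenerate-case details is exactly the route the paper takes.
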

\begin{proof}
	For the proof we refer to \cite[Theorem 13.4]{LiptserShiryaevVolII2001}, another reference is  \cite{chen1989kalman}.
\end{proof}

\begin{remark}
	 The notation \([A]^+\) represents the Moore-Penrose pseudoinverse of a matrix \( A \). This generalized inverse is particularly useful in filtering problems, as it ensures a well-defined solution even when \( A \) is singular or non-square. It satisfies the fundamental properties: \( A[A]^+ A = A \), \( [A]^+ A [A]^+ = [A]^+ \), \( (A [A]^+)^T = A [A]^+ \), and \( ([A]^+ A)^T = [A]^+ A \). Moreover, among all possible generalized inverses, \( [A]^+ \) provides the solution with the smallest Euclidean norm in least-squares problems, making it particularly suitable for stable state estimation in filtering and data assimilation applications. This minimal norm property is well established in linear algebra and optimization literature (see, e.g. \cite{ben2006generalized}). 
	
	 The pseudoinverse $	[\cell\;\cell^\top + \ch_1 \condvar_{n} \ch_1^\top]^{+}$
	is present in both the conditional mean and conditional covariance equations. Employing the pseudoinverse guarantees numerical stability, particularly in cases where the  coefficient \(\ell\) scaling the noise in the recursion for the observations is such that $\ell\ell^\top$  a singular matrix, as it is the case for the extended \covid model due to the cascade states.	
\end{remark}
\begin{remark}
	
	 Unlike the standard Kalman filter for linear Gaussian systems with constant coefficients, where the conditional covariance \( Q_n \) is computed by solving a  Riccati equation and can be determined offline i.e., prior to receiving any observations, the covariance computation for the Kalman filter for  conditionally Gaussian  sequences must be performed online. This is because, in this setting, the covariance \( Q_n \) may explicitly depend on the observation path \( \Zpathn \), requiring real-time updates as new data becomes available.
\end{remark}

\subsection{Extended Kalman Filter}
\label{sec:EKF}
In many real-life scenarios, systems exhibit nonlinear behavior, as illustrated in \cite{cazelles1995adaptive,ndanguza2017analysis,song2021maximum}, making the EKF essential to develop a filtering method capable of handling such complexities.  The EKF extends the applicability of the Kalman filter for conditional Gaussian sequences to nonlinear systems by linearizing the system dynamics at each time step, enabling an iterative estimation process. This adaptation addresses the limitations of the original Kalman filter and enables more accurate state estimation in a wider range of dynamic environments.\\
The model we will consider in our work and in particular in this section is described by the equations  \eqref{state_YZ} in Section \ref{sec:model}. The drift coefficient $f$ is non-linear with respect to the signal $Y$, and all diffusion coefficients $\sigma, g, \ell$ may also depend on the signal $Y$. With these  properties, the ensuing filtering problem becomes nonstandard due to nonlinearities in both the drift and the diffusion coefficients. Notably, the observation equation is influenced by signal-dependent noise. To address the nonlinearity in the drift coefficient $f$, we adopt the approach proposed by Gelb (1974) \cite{gelb1974applied}, Pardoux et al.~\cite{burkholder1991filtrage}, and Bain, Crisan \cite{bain2009fundamentals}, involving the linearization of $f$ through a Taylor expansion around a ``suitable'' reference point  $\overline{Y}_{n}$ in each time interval. Additionally, following \cite{picard1993estimation}, we freeze the diffusion matrix at the  reference point $\overline{Y}_{n}$ and can then apply the results for the Kalman filter for conditional Gaussian sequences  given in Theorem \ref{Theo_EKF}.
\subsection{Approximation by  Conditional Gaussian Sequences}\label{ACGS}
 In this section, we derive an approximation of the  nonlinear recursions \eqref{state_YZ} for the state processes $(Y_n),(Z_n)$ in the form of the recursions \eqref{kfcgm2} that will allow to apply the Kalman filtering results for  conditional Gaussian sequences. 	
	Our approach relies on linearizing the drift coefficient \( f \)  at each time interval with respect to the signal $Y_n$ using a first-order Taylor expansion around a suitably chosen reference point \( \overline{Y}_n \), which will be specified later.  Note that for the base and extended \covid model considered above, the drift coefficient  of the observation equation is already linear in the signal, and  does not require linearization. Therefore,  we will  restrict ourselves in the following to systems \eqref{state_YZ} with linear observation drift of the form 
		\begin{align}\label{h_linear}
			h(n,y,z)=h_0(n,z)+h_1(n,z)y.
		\end{align}  
	% first-order gradient (Jacobian) of the drift coefficient \( f \) with respect to \( Y \), 
	Denoting  the reference point for the expansion at time $n$ by $\overline{Y}_n=\overline y$, this results in the linearized coefficient  
\begin{align}
\label{linearization_f}
f(n, y, z) \approx f(n, \overline{y}, z) + \nabla_y f(n, \overline{y}, z)(y - \overline{y}),
\end{align}
where $\nabla_y$ denotes  the Jacobian  of the drift coefficient \( f \) with respect to \( y \).
Additionally, we replace  the signal  \( Y_n \) by the reference point \( \overline{Y}_n \) in the diffusion coefficients $g,\sigma,\ell$. This leads to the following  approximate recursion for  the state variables, denoted by \( \widetilde{Y}_n \) and \( \widetilde{Z}_n \).

\begin{lemma}\label{lem:linearized_system}
The approximation of the system dynamics \eqref{state_YZ} with a linear drift of the observation equation as in \eqref{h_linear}, after the first-order linearization   \eqref{linearization_f} of the drift coefficient $f$ and replacing the signal $Y_n$ with the reference  point  \( \overline{Y}_n \) in the diffusion coefficients $g,\sigma,\ell$, is given by the recursions
	\begin{align}\label{ekf_5}
		\begin{split}
			\widetilde{Y}_{n+1}&=  f_{0}(n, \overline{Y}_{n},\widetilde{Z}_{n})  + f_{1}(n,\overline{Y}_{n},\widetilde{Z}_{n})\widetilde{Y}_{n}+\sigma(n,\overline{Y}_{n},Z_{n}){\Noise^1_{n+1}} + g(n,\overline{Y}_{n},\widetilde{Z}_{n}) {\Noise^{2}_{n+1}},\\%[1ex]
			{\widetilde{Z}_{n+1}}&={ h_{0}(n,\widetilde{Z}_{n}) + h_{1}(n,\widetilde{Z}_n)\widetilde{Y}_{n} + \ell(n,\overline{Y}_{n},\widetilde{Z}_{n})}{\Noise^{2}_{n+1}}\\
			Y_{0}& =y, \quad Z_{0}=z. 
		\end{split}
	\end{align}
 The functions $f_0$ and $f_1$ are given for $\overline{Y}_n=\overline y$ by 
\begin{align}
	f_{0}(n, \overline{y},z) &= f(n,\overline{y},z)- \nabla_y f(n, \overline{y}, z)\overline{y} \quad \text{and} \quad 
	f_{1}(n,\overline{y},z)  = \nabla_y f(n, \overline{y}, z).
\end{align}		
\end{lemma}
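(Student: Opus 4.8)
The plan is to derive the claimed recursions \eqref{ekf_5} directly by substitution into the original system \eqref{state_YZ}. The statement is essentially bookkeeping: one must show that after (i) replacing $f(n,Y_n,Z_n)$ by its first-order Taylor polynomial \eqref{linearization_f} about the reference point $\overline Y_n$, (ii) substituting $\overline Y_n$ for $Y_n$ inside $\sigma$, $g$, $\ell$, and (iii) using the assumed affine form \eqref{h_linear} for $h$, the resulting dynamics can be rewritten in the affine-in-signal form displayed in the lemma, with $f_0,f_1$ given by the stated formulas.

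First I would take the signal recursion $Y_{n+1}=f(n,Y_n,Z_n)+\sigma(n,Y_n,Z_n)\Noise^1_{n+1}+g(n,Y_n,Z_n)\Noise^2_{n+1}$ and perform the three substitutions. The noise-scaling terms become $\sigma(n,\overline Y_n,Z_n)\Noise^1_{n+1}+g(n,\overline Y_n,\widetilde Z_n)\Noise^2_{n+1}$ immediately (with $Z_n$ replaced by the approximate observation $\widetilde Z_n$ where appropriate). For the drift, inserting \eqref{linearization_f} with $\overline y=\overline Y_n$ gives
\begin{align*}
	f(n,\overline Y_n,\widetilde Z_n)+\nabla_y f(n,\overline Y_n,\widetilde Z_n)\big(\widetilde Y_n-\overline Y_n\big)
	= \underbrace{\big(f(n,\overline Y_n,\widetilde Z_n)-\nabla_y f(n,\overline Y_n,\widetilde Z_n)\overline Y_n\big)}_{=f_0(n,\overline Y_n,\widetilde Z_n)}+\underbrace{\nabla_y f(n,\overline Y_n,\widetilde Z_n)}_{=f_1(n,\overline Y_n,\widetilde Z_n)}\widetilde Y_n,
\end{align*}
which is exactly the first line of \eqref{ekf_5} and simultaneously identifies $f_0$ and $f_1$. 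For the observation recursion $Z_{n+1}=h(n,Y_n,Z_n)+\ell(n,Y_n,Z_n)\Noise^2_{n+1}$, one substitutes \eqref{h_linear} to get $h_0(n,\widetilde Z_n)+h_1(n,\widetilde Z_n)\widetilde Y_n$ for the drift and freezes $\ell$ at $\overline Y_n$, yielding the second line of \eqref{ekf_5}. The initial conditions $Y_0=y$, $Z_0=z$ carry over unchanged.

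I do not expect a genuine obstacle here — the lemma is a definitional/formatting result whose content is that the linearized system has the structural form \eqref{kfcgm2} required by Theorem \ref{Theo_EKF}. The only point deserving a line of care is the consistent use of the approximate observation sequence $\widetilde Z_n$ in place of $Z_n$ in the arguments of $f_0,f_1,g,\ell,h_0,h_1$ (while $\sigma$ is written with $Z_n$, matching the displayed equation), and noting that this substitution is precisely what makes the coefficients depend only on the observation path and the reference point, not on the current signal — the hypothesis under which the conditional-Gaussian machinery of Theorem \ref{Theo_EKF} applies. One may optionally close with the remark that \eqref{ekf_5} is of the form \eqref{kfcgm2} with $\widetilde f_0(\Zpathn)=f_0(n,\overline Y_n,\widetilde Z_n)$, $\widetilde f_1(\Zpathn)=f_1(n,\overline Y_n,\widetilde Z_n)$, and analogously for the remaining coefficients, so that Theorem \ref{Theo_EKF} becomes directly applicable; this is what the next subsection exploits.
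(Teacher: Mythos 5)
Your proposal is correct and follows exactly the paper's own argument: substitute the Taylor linearization of $f$ and the frozen diffusion coefficients into \eqref{state_YZ}, use the affine form \eqref{h_linear} for $h$, and regroup the drift as $f_0+f_1\widetilde{Y}_n$ to read off the stated formulas. The paper's proof is the same two-line substitute-and-rearrange computation, so no further comment is needed.
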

 \begin{proof} Performing the linearization  \eqref{linearization_f} of $f$ and the substitution of $Y_n$ by the reference point in the in the diffusion coefficients $g,\sigma,\ell$ in \eqref{state_YZ} yields
	\begin{align}
		\widetilde{Y}_{n+1}&=   f(n,{\overline{Y}_{n}},\widetilde{Z}_{n}) +\nabla_y f(n,{\overline{Y}_{n}},\widetilde{Z}_{n})\big(\widetilde{Y}_{n}-{\overline{Y}_{n}}\big) + \sigma (n,{\overline{Y}_{n}}, \widetilde{Z}_{n})\Noise^1_{n+1} 		+ g(n,{\overline{Y}_{n}},\widetilde{Z}_{n})\Noise^2_{n+1}, 
		\label{ekf_3}\\
		\widetilde{Z}_{n+1}&= h_0(n,\widetilde Z_n)+h_1(n,\widetilde Z_n) \widetilde Y_n  +\ell(n,{\overline{Y}_{n}},\widetilde{Z}_{n})\Noise^2_{n+1}.  \label{ekf_4}			
	\end{align}
	Rearranging terms yields the assertion. \qed
\end{proof}

The linearized recursion in \eqref{ekf_5} can now be further transformed in the form of the recursions \eqref{kfcgm2} such that the Kalman filter results from Theorem \ref{Theo_EKF} can be used to derive approximations of the filter processes $(\condmean_n),(\condvar_n)$ for the original system \eqref{state_YZ}. These approximations will be denoted  by $(\condmeanEKF_n),(\condvarEKF_n)$. The key idea of the EKF approach is to choose the reference point $\overline{Y}_n $ for the linearization of the drift coefficient and the substitution of the signal $Y_n$ in the diffusion coefficients in each time interval  as the current (approximate) estimate $\condmeanEKF_n$ of the unobserved state $Y_n$. Further, the actual observation sequence  $(Z_n)$ is supposed to be generated by the recursion for $(\widetilde{Z}_n)$ in \eqref{ekf_5}. This leads to the recursive computation of the filter process approximations presented in Algorithm \ref{Algo_EKF}.

\begin{algorithm}[!ht] 
	\DontPrintSemicolon
	
	%		\KwData{$T, Z_{0},\ldots,Z_{T}$}
	\KwIn{$Z_{0},\ldots,Z_{N_{t}}$; model parameters,   prior information $\Fprior$ } 
	\KwOut{Approximations $\condmeanEKF_n, \condvarEKF_n$ of $\condmean_n:=\mathbb{E}[Y_n|\mathcal{F}_{n}^{Z}]$ and $\condvar_n:=\var(Y_n|\mathcal{F}_{n}^{Z})$ for $n=0,\ldots,N_t$}
	\textbf{Initialization :}~~$n:=0$, \quad   $\condmeanEKF_0:=\condmean_0=\mathbb{E}[Y_0|\mathcal{F}_{0}^{Z}]$,  $\condvarEKF_0:=\condvar_0=\var(Y_0|\mathcal{F}_{0}^{Z})$ 
	\begin{enumerate}
		\item[(i)] State prediction 
		\begin{align}
			\begin{split}
				\condmeanEKF_{n+1}=   f_0+f_1  \condmeanEKF_{n}   ~+ ~\big( g\ell^\top+f_1 \condvarEKF_{n} h_1^\top \big)\big[\ell\ell^\top + h_1 \condvarEKF_{n} h_1^\top\big]^{+} \nonumber  \big( {\widetilde{Z}_{n+1}}- \big( h_0+h_1 \condmeanEKF_{n} \big) \big) 
			\end{split}
		\end{align}			
		\item[(ii)] Error measurement
		\begin{align}	
			\condvarEKF_{n+1}=  -\big( g\ell^\top+
			f_1 \condvarEKF_{n} h_1^\top \big)\nonumber \big[\ell\ell^\top + h_1 \condvarEKF_{n} h_1^\top\big]^{+}  \big( g\ell^\top+ f_1 \condvarEKF_{n} h_1^\top  \big)^\top 
			+f_1 \condvarEKF_{n}f_1^\top+ \sigma\sigma^T + gg^{\top} 
		\end{align}
		All coefficient functions are evaluated at the point $(n,\condmeanEKF_n, Z_n)$
		\item[(iii)] Repeat (i) and (ii) for the next time step until all samples
		are processed.
	\end{enumerate}
	
	\caption{EKF Algorithm \label{Algo_EKF}}
\end{algorithm}

Recalling that  $\Zpathn=(Z_0,\ldots,Z_n)$, the coefficients $\cb=\cf_0,\cf_1,\cg,\csigma,\cell$ of the filtering system  \eqref{kfcgm2} can be set to  $\cb(\Zpathn)=\cb((Z_0,\ldots,Z_n)) = b(n,\condmeanEKF_n,Z_n)$, while for the coefficients  $\cb=\ch_0,\ch_1$ appearing in the recursion  for the observations it holds $\cb(\Zpathn)=\cb((Z_0,\ldots,Z_n)) = b(n,Z_n)$. Note that $\condmeanEKF_n$ is defined recursively and its computation requires the knowledge on the complete observation path $\Zpathn$.

The recursion in Algorithm \ref{Algo_EKF} is initialized with the  mean and covariance $\condmean_0,\condvar_0$ of the conditional distribution of signal $Y_0$ at time $n=0$ given the prior information $\Fprior$ and the first observation $Z_0$. Recall, this distribution is assumed to be the Gaussian distribution $\mathcal{N}(\condmean_0,\condvar_0)$. The choice of these initial values is further discussed below in Section \ref{Initial_val}.

\begin{remark}\label{rem_error_analysis} 
For an analysis of the error resulting of the EKF approximation of the filter of the original nonlinear filtering problem we refer to the paper of Picard \cite{picard1991efficiency} and a recent extension by  Mbouandi Njiasse et al.~\cite{njiasse2025ExtendPicard}. There the authors justify in a  continuous-time setting  that under suitable regularity conditions on \( f \) and \( h \), the EKF provides a first-order approximation of the filter of the  nonlinear system, with an error depending on the smoothness of the model coefficients.
\end{remark}
\section{ Application to  \covid Models}\label{sec:application}
 In this section, the filter results mentioned above are applied to the base  \covid model presented in Subsection \ref{sec:BaseModel}  and depicted in Figure \ref{ch4_Model SIRpm_7}, and  the extended \covid model introduced in \ref{sec:ExtendedModel} with $\KR=\KV=3$ cascade states  and depicted in Figure \ref{ch4_Model SIRpm_1016}.  %We then explain how the initial estimates $\condmeanEKF_0,\condvarEKF_0$ of the two filter processes in the EKF algorithm \ref{Algo_EKF} can be selected.

\subsection{Linearization}\label{sec:applic_linear}
 The linearization of the nonlinear drift term $f$ that appears in the dynamics \eqref{state_YZ} of the hidden state $Y$ leads to  the linearized recursion in Equation \eqref{ekf_5} of Lemma \ref{lem:linearized_system} with the coefficients $f_0$ and $f_1$. They are given for the base  model in detail in  \eqref{drift_signal_linear_base} in  Appendix \ref{app:base_coeff}. The other coefficients $h_0,h_1,\sigma,g,\ell$ are not affected by the linearization, they are also given in  Appendix \ref{app:base_coeff}.
	
For the extended model the coefficients $f_0$ and $f_1$ are given in \eqref{drift_signal_linear_ext} in Appendix \ref{app:ext_coeff}. It also contains the details about  the other coefficients $h_0,h_1,\sigma,g,\ell$.

\subsection{Initialization of the Filter Estimates}\label{Initial_val}
Computing filter estimates requires the initialization of the filter processes 	$\condmeanEKF,\condvarEKF$  at time $n=0$. In the following, we show for the  \covid models introduced above, how $\condmeanEKF_0$ and $\condvarEKF_0$ can be constructed   based on the prior information contained in $\Fprior$, and the first observation $Z_0$. 
We give a detailed derivation for  the extended \covid model with cascade states as outlined in Subsection \ref{sec:ExtendedModel}, for which we provide numerical results in the next section.  The approach  for the base model introduced in Subsection  \ref{sec:BaseModel} is analogous, see Remark \ref{rem:ini_base_model}. Note that the initial values $\condmeanEKF_0,\condvarEKF_0$ are not yet distorted by linearization errors that occur in the EKF approximation of the “true” filter processes $\condmean_n,\condvar_n$ for $n>0$. Therefore, we remove the tilde from the notation and write $\condmean_0,\condvar_0$ instead of $\condmeanEKF_0,\condvarEKF_0$.

The extended \covid model consists of the five hidden states, $I^{-},R_{1}^{-}, R_{2}^{-}, V^{-}, S$ forming the vector $Y$, and ten observable states, $I^{+},H,C,D,R_{1}^{+}, R_{2}^{+}, R_{3}^{+}, V_{1}, V_{2}, V_{3}$ forming the observation vector $Z$. 
For the hidden states $R_{2}^{-}, V^{-}$, we assume for simplicity that they start with zero initial values. This seems to be an appropriate choice at the beginning of the pandemic, as these compartments will only accommodate the first individuals  after the phase of complete immunity following recovery and vaccination. Further, the vaccination was not yet available at the outbreak of the pandemic.

More challenging is the  estimation the numbers of  initially undetected infected  individuals $I^-_0$ and  undetected recovered individuals $R^-_{1,0}$, from which the estimation of the initially susceptible $S_0$ can then be derived from the normalization property using the assumption of a constant total population size $N$. Note that, unlike $R^-_{2}$, compartment $R^-_{1}$ already accommodates the first individuals at a relatively early stage, namely after the recovery of the first undetected infected individuals. Therefore, it cannot be assumed that it is empty at the start of the pandemic.

To estimate $I^-_0$ and $R^-_{1,0}$, it is helpful to work with  dark figure coefficients (DFC), which are defined as the ratio of undetected to detected individuals, i.e.  
\begin{align}\label{def:DFC}
	\dfc_{n}^{1}=\frac{Y_{n}^{1}}{Z_{n}^{1}} = \frac{I_{n}^{-}}{I_{n}^{+}} \quad \text{and}\quad \dfc_{n}^{2}=\frac{Y_{n}^{2}}{ Z_{n}^{5}+Z_{n}^{6}+Z_{n}^{7}} = \frac{R_{1,n}^{-}}{ R_{1,n}^{+}+ R_{2,n}^{+} + R_{3,n}^{+}}.
\end{align}
The advantage of DFCs is that they can be more easily quantified by analysts or experts and enable structured integration of prior knowledge into the estimation process. For example, interpreting $\dfc_{n}^{1}$ means that for every detected infected person, there are $\dfc_{n}^{1}$ undetected infected individuals, while $\dfc_{n}^{2}$ indicates the number of individuals who have recovered from an undetected infection per recovered individual with complete immunity.

At initial time $n=0$ we make the following 
\begin{assumption}\label{ass:dfc}~
\begin{enumerate}
\item 
Given the prior information	 $\Fprior$ the initial dark figure coefficients $\dfc_{0}^{1}$ and  $\dfc_{0}^{2}$ given in \eqref{def:DFC} are conditionally independent with conditional Gaussian distributions $\mathcal{N}(\dfcmean^i,\dfcvar^i)$, $i=1,2$.
\item
Initially, the compartments $R_{2}^{-}$ and $V^{-}$ are empty, i.e.,   $Y_{0}^{3} = R_{2,0}^{-}=0$ and $ Y_{0}^{4}= V^-_0 =0$.
\end{enumerate}
\end{assumption}	
Under this assumption the conditional mean $\dfcmean^i$  can be considered as an unbiased estimate of the initial DFC $\dfc_{0}^{i}$, $i=1,2$, or an ``expert's view'' provided by an analyst  equipped with the prior information encoded in $\Fprior$.  The accuracy of this estimate is described by the variance parameter  $\dfcvar^i$, in the sense that $(\dfcvar^i)^{1/2}$ is the the conditional standard deviation, while $1/(\dfcvar^i)^{1/2}$ serves as reliability measure. The smaller $\dfcvar^i$ the more accurate or reliable is the expert's view $\dfcmean^i$ for the unknown DFC $\dfc^i$. With $\dfcvar^i=0$ we can model full information about the initial value of the hidden state $Y_{0}^{1}$.

\begin{lemma}\label{lem:initial_estimate}
 For the extended \covid model and under Assumption \ref{ass:dfc}, the conditional distribution of the initial hidden state  \( Y_0 = (Y_{0}^{1},\ldots, Y_{0}^{5})^{\top} \)  given $\mathcal{F}_{0}^{Z}$, i.e., the prior information $\Fprior$ and the initial observation $Z_0$, is Gaussian. The mean and covariance matrix of this distribution are given by
\begin{align}\label{Mat_cov}
	\begin{split}	
\condmean_0= \mathbb{E}[Y_{0} | \mathcal{F}_{0}^{Z}]& =   (\dfcmean^{1}Z_{0}^{1},\dfcmean^{2}Z_{0}^{2},0,0,M_0^{Y_{5}})^\top, \\
\condvar_0 = \Var(Y_{0} | \mathcal{F}_{0}^{Z}) &=
\begin{pmatrix}
	\dfcvar^1( Z_{0}^{1})^2 & 0 & 0 & 0 & -\dfcvar^1( Z_{0}^{1})^2 \\[0.25ex]
	0 & \dfcvar^2( Z_{0}^{2})^2 & 0 & 0 & -\dfcvar^2( Z_{0}^{2})^2 \\[0.25ex]
	0 & 0 & 0 & 0 & 0 \\[0.25ex]
	0 & 0 & 0 & 0 & 0 \\[0.25ex]
	-\dfcvar^1( Z_{0}^{1})^2 & -\dfcvar^2( Z_{0}^{2})^2 & 0 & 0 & \dfcvar^1( Z_{0}^{1})^2 + \dfcvar^2( Z_{0}^{2})^2
\end{pmatrix},\\
\text{with}~~M_0^{Y_{5}}&=  N- M_0^{Y_{1}} - M_0^{Y_{2}}- \sum_{i=1}^{10}Z_{0}^{i}.
\end{split}
\end{align}

\end{lemma}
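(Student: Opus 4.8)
The plan is to derive the joint conditional distribution of $Y_0$ given $\mathcal{F}_0^Z$ by expressing each hidden component as an explicit affine function of the two dark figure coefficients $\dfc_0^1,\dfc_0^2$ (which, conditionally on $\mathcal{F}_0^Z$, are jointly Gaussian by Assumption \ref{ass:dfc}) and of the $\mathcal{F}_0^Z$-measurable observation $Z_0$, and then simply reading off the mean vector and covariance matrix from the fact that affine images of Gaussians are Gaussian. Concretely, I would first note that, given $\mathcal{F}_0^Z$, the observation $Z_0$ is a known constant, so $Z_0^1,\ldots,Z_0^{10}$ may be treated as deterministic. The definitions in \eqref{def:DFC} give $Y_0^1 = \dfc_0^1 Z_0^1$ and $Y_0^2 = \dfc_0^2 (Z_0^5+Z_0^6+Z_0^7)$; by the normalization convention used here the denominator in the second DFC is abbreviated as $Z_0^2$ in the statement, so I will adopt that same indexing. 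Assumption \ref{ass:dfc}(2) supplies $Y_0^3=0$ and $Y_0^4=0$ deterministically.

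The second step is to handle the susceptible component $Y_0^5=S_0$ via the constant-population-size constraint $N = \sum_{i=1}^{5} Y_0^i + \sum_{i=1}^{10} Z_0^i$, which yields
\begin{align}
Y_0^5 = N - Y_0^1 - Y_0^2 - Y_0^3 - Y_0^4 - \sum_{i=1}^{10} Z_0^i = N - \dfc_0^1 Z_0^1 - \dfc_0^2 Z_0^2 - \sum_{i=1}^{10} Z_0^i.
\end{align}
Thus the whole vector $Y_0$ is an affine function $Y_0 = A\,(\dfc_0^1,\dfc_0^2)^\top + c$ of the Gaussian pair, where $A$ has rows $(Z_0^1,0)$, $(0,Z_0^2)$, $(0,0)$, $(0,0)$, $(-Z_0^1,-Z_0^2)$ and $c=(0,0,0,0,N-\sum_{i=1}^{10}Z_0^i)^\top$, all $\mathcal{F}_0^Z$-measurable. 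Since the conditional law of $(\dfc_0^1,\dfc_0^2)^\top$ is $\mathcal{N}\big((\dfcmean^1,\dfcmean^2)^\top,\diag(\dfcvar^1,\dfcvar^2)\big)$ by conditional independence and Gaussianity, the conditional law of $Y_0$ is Gaussian with mean $A(\dfcmean^1,\dfcmean^2)^\top + c$ and covariance $A\,\diag(\dfcvar^1,\dfcvar^2)\,A^\top$.

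The third step is the routine but slightly tedious computation of these two quantities. The mean vector evaluates entrywise to $(\dfcmean^1 Z_0^1,\ \dfcmean^2 Z_0^2,\ 0,\ 0,\ N - \dfcmean^1 Z_0^1 - \dfcmean^2 Z_0^2 - \sum_{i=1}^{10} Z_0^i)^\top$, which matches \eqref{Mat_cov} once one identifies $M_0^{Y_1}=\dfcmean^1 Z_0^1$ and $M_0^{Y_2}=\dfcmean^2 Z_0^2$. For the covariance, $A\,\diag(\dfcvar^1,\dfcvar^2)\,A^\top$ has $(1,1)$-entry $\dfcvar^1(Z_0^1)^2$, $(2,2)$-entry $\dfcvar^2(Z_0^2)^2$, $(5,5)$-entry $\dfcvar^1(Z_0^1)^2 + \dfcvar^2(Z_0^2)^2$, cross terms $(1,5)=(5,1)=-\dfcvar^1(Z_0^1)^2$ and $(2,5)=(5,2)=-\dfcvar^2(Z_0^2)^2$, and zeros elsewhere (rows/columns $3$ and $4$ vanish because the corresponding rows of $A$ are zero, and the $(1,2)$ entry vanishes by conditional independence). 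This reproduces the displayed matrix exactly.

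I do not anticipate a genuine obstacle here; the only points requiring care are bookkeeping ones. The main subtlety is verifying that $Z_0$ really is $\mathcal{F}_0^Z$-measurable so that it can be pulled out of the conditional expectation as a constant — this is immediate from the definition $\mathcal{F}_0^Z = \sigma\{Z_0\}\vee \Fprior$. A second point worth a sentence is that the conditional Gaussianity of $Y_0$ follows not from Theorem \ref{thm_condGauss} (which concerns the propagated system) but directly from the affine-image-of-Gaussian argument, using only Assumption \ref{ass:dfc}. The last thing to double-check is the index convention for the recovered-compartment denominator, i.e.\ that the symbol $Z_0^2$ in \eqref{Mat_cov} stands for $R_{1,0}^++R_{2,0}^++R_{3,0}^+$ as in \eqref{def:DFC}; once that identification is fixed the computation is entirely mechanical.
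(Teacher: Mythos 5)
Your proposal is correct and follows essentially the same route as the paper's proof: both rest on the DFC definition, the vanishing of $Y_0^3,Y_0^4$, the normalization identity for $Y_0^5$, the conditional independence of the two DFCs, and the $\mathcal{F}_0^Z$-measurability of $Z_0$. Your packaging of $Y_0$ as an affine image of the Gaussian pair $(\dfc_0^1,\dfc_0^2)$ is a slightly cleaner way to obtain joint Gaussianity of the whole vector (the paper instead computes the covariance entries one by one via bilinearity), and your remark on the index convention for the recovered-compartment denominator correctly resolves a notational inconsistency between \eqref{def:DFC} and the statement of the lemma.
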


\begin{proof}
%The conditional covariance matrix \(\condvar_0\) is derived based on the total population constraint and the properties of covariance. All relevant details are provided in 
The proof is given in Appendix \ref{Proof_IE}.
\end{proof}

The structure of the covariance matrix $\condvar_0$ shows that  the susceptible population size \( Y_{0}^{5} \) is negatively correlated with the hidden compartments \( Y_{0}^{1} \) and \( Y_{0}^{2} \). This negative covariance arises because an increase in the number of undetected infections and recoveries reduces the number of susceptible individuals. Furthermore, the magnitude of these covariances depends on the initial uncertainty in the dark figure coefficients, represented by \( \dfcvar^{1} \) and \( \dfcvar^{2} \), scaled by the square of the respective observed detected cases \( Z_{0}^{1} \) and \( Z_{0}^{2} \).

\begin{remark}\label{rem:ini_base_model}
As the extended \covid model also the base model consists of  five hidden states, these are now $I^{-},R^{-}, R^{+}, V, S$. Again it is reasonable to assume that $V$ is empty at the beginning, as vaccination is not yet available. The estimates for $R^{-}, R^{+}$ are either deduced from expert view's or for simplicity one assumes that theses compartments are  initially  empty. The initial estimate for $I^-$ can be obtained as above in the extended model, and the estimate for $S$ is obtained using the normalization property. 

\end{remark}

\section{Numerical Results}\label{sec:NumericalResults}
In this section, we present and analyze numerical results from a simulation experiment in which we use an  implementation of the EKF Algorithm \ref{Algo_EKF} to estimate undetected states in a stochastic epidemic model with partial information. The focus is on the extended \covid model presented in Subsection \ref{sec:ExtendedModel}, for which we calibrate important model parameters using real data from the \covid pandemic in Germany. The goal is to evaluate the performance of the EKF filtering method and investigate how different model components and parameters influence the quality of the state estimation.

We begin in Subsection \ref{sec:paramters}, which provides an overview of the model parameters and their calibration. Subsequently, Subsection \ref{EKF_Performance} presents results for the performance of the estimation of undetected states, in particular undetected infected individuals, which are crucial for epidemic monitoring. Next, in Subsection \ref{Impact_IEtim}, we examine the sensitivity of the EKF to filter initialization by investigating how different choices for the initial conditional mean and covariance matrix affect the estimation accuracy.
Finally, in Subsection \ref{Impact_CS}, we examine the effects of cascade compartments containing individuals with complete immunity on the system dynamics and the estimation process.

\subsection{Settings for Numerical Simulations}\label{sec:paramters}
\begin{table}[h]
	
	\begin{center}
		\footnotesize 
		\begin{tabular}{ c |l| c | r }
			Parameters & Description & Value  & Reference \\ \hline
			$\beta$ & Transmission rate & \multicolumn{2}{l}{Time-dependent $\beta_n,\alpha_n,\mu_n$  } \\
			$\alpha$ & Test rate &\multicolumn{2}{l}{Fitted to German \covid data,} \\
			$\mu$ & Vaccination rate & \multicolumn{2}{l}{between 2020 and 2023, see \cite{kamkumo2025stochastic}} \\\hline			
			$\gamma^{-}$ & Recovery rate (undetected infected) & $ 1/14 $ & Based on \cite{bhapkar2020revisited}\\
			$\gamma^{+}$ & Recovery rate (detected infected) & $ 1/14 $ & Based on \cite{bhapkar2020revisited}\\
			$\gamma^{H}$ & Recovery rate from hospitalization & $ 0.048 $ & Based on \cite{CharpentierElieLauriereTran2020}\\
			$\gamma^{C}$ & Recovery rate from ICU & $0.02 $ & Assumed\\			
			$\eta^{+}$ & Hospitalized rate (detected infected) & $0.0023 $ & Based on \cite{CharpentierElieLauriereTran2020}\\ 
			$\eta^{-}$ & Hospitalized rate (undetected infected) & $0.0023 $ & Based on \cite{CharpentierElieLauriereTran2020}\\
			$\delta$ & Transfer rate to ICU (undetected infected) & $0.03 $ & Assumed\\
			$\kappa$ & Death rate  & $0.05 $ & Assumed\\
			$\rho^{-}_{1}$ & Rate of losing immunity (undetected recovered) & $ 120 $ & Assumed\\
			$\rho^{-}_{2}$ & Rate of losing immunity (detected recovered) & $  1/30 $ & Assumed\\
			$\rho^{V}$ & Rate of losing immunity (after vaccination) & $ 1/200 $ & Assumed\\
			\hline
			&&&\\[-1.5ex]
			$N$ &  Total population size  & $10^6$ & Assumed \\
			$T$ &   Horizon time & $3$ years & see \cite{kamkumo2025stochastic}\\
			$\Delta t$ &   Time step & $1$ day & Assumed, see \cite{kamkumo2025stochastic}\\
			$\LV=\LR$ &  Number of time steps with complete immunity & $90$ & Assumed, see \cite{kamkumo2025stochastic}\\
			$\KV=\KR$ &  Number of cascade compartments & $3$ & Assumed, see \cite{kamkumo2025stochastic}  \\
			$P_1=P_2=P_3$ &  Number of time steps grouped to one cascade comp. & $30$ & Assumed, see \cite{kamkumo2025stochastic}\\
			$\dfcmean^1=\dfcmean^2$ & Mean of dark figure coefficient & $10$ & Assumed\\			
			$\dfcvar^1= \dfcvar^1$ & Variance  of dark figure coefficient & $25$ & Assumed\\
			\hline
			
		\end{tabular}
	\end{center}
	\caption{Parameters values for the Extended  \covid model. 
	}
	\label{Table_Param_Covid19-1}
\end{table} 

 In Table \ref{Table_Param_Covid19-1}, we provide numerical values for the model parameters, in particular for the various parameters that control the transition intensities specified above in Table \ref{Table_Info_Model1}. The transmission or infection rate $\beta$, the test rate $\alpha$ and the vaccination rate $\mu$ are assumed to vary over time. They are calibrated to German COVID-19 data covering a three-year period from March 2020 to March 2023. The data is publically available\footnote{see \url{https://github.com/robert-koch-institut}}  from the Robert Koch Institute (RKI), a German federal government agency and research institute responsible for disease control and prevention. For the calibration procedure we refer to our article  \cite{kamkumo2025stochastic}.	The other parameters are assumed to be constants as given in Table \ref{Table_Param_Covid19-1}.
	
Based on these model parameters paths of both  the hidden and observable states have been generated using the recursion \eqref{state_YZ} and the initial values given in Table \ref{tab:init_vals}. 
Since no vaccination was available at the onset of the pandemic we start with empty compartments $V_1,V_2,V_3,V^-$. For simplicity we assumed that initially only the first cascade compartment $R_1^+$ including individuals recovered from a detected infection in the last $P_1=30$ days is non-empty, whereas $R_2^+,R_3^+,R_{2}^-$ containing individuals with larger recovery ages  are supposed to be  empty. Note that, unlike $R^-_2$, the compartment $R^-_1$ is not empty, as it contains individuals which recovered from a undetected infection,  but already from the first day after recovery.

The initial values for the filter processes $\condmean_0$ and $\condvar_0$ were determined by using Lemma \ref{lem:initial_estimate} and is based on Assumption \ref{ass:dfc}. In particular, we used the assumption that  the initial dark figure coefficients  $\dfc_1,\dfc_2$ for the ratios of undetected to detected numbers of infected and recovered individuals are independent normally distributed random variables with mean $\dfcmean^1=\dfcmean^2=10$ and variance  $\dfcvar^1=\dfcvar^2=25$, specified in Table \ref{Table_Param_Covid19-1}. 
Although, the actual values of the initial  DFCs $\dfc_{1,0}=I^-_0/I^+_0=300/75=4$ and $\dfc_{2,0}=R^-_0/(R^+_{1,0}+ R^+_{2,0}+R^+_{3,0})=80/50=1.6$, are located in the central region around the mean $\dfcmean^i=10$ of the supposed Gaussian distributions with standard deviation $(\dfcvar^i)^{1/2}=5,\, i=1,2$, the differences of the actual and estimated values of $I^-$ and $R^-_1$ are quite large. This is caused by the relatively large variances $\dfcvar^i$, which indicate a high degree of uncertainty in this estimate.
The initial estimate for $S$ is derived from the normalization property, that is the total population size minus the sum of all other initial compartment sizes and estimates, see Lemma \ref{lem:initial_estimate}.

After simulating the paths of the hidden and observable states, we applied the EKF algorithm to estimate the hidden states from the observations and evaluate the accuracy of the filter by comparing the estimated values with the actual simulated values. It is important to emphasize that the methodology is generic and can be adapted to models with other epidemiological parameters and total population sizes $N$.

\begin{table}[h!]
	%\centering
	\hspace*{0.1\textwidth}	
	\begin{tabular}{|c|c|c||c|c|}
		\hline
		\multicolumn{5}{|c|}{\textbf{Hidden States}}\\\hline
		$i$& {Variable} & {Initial value} & {Estimated initial value} & {Initial variance} \\
		& $Y^{i}$ & $Y_{0}^{i}$ & $\condmean_{0}^{i}=\condmean^{Y^i}_{0}$ & $\condvar^{ii}_{0}=\condvar^{Y^i}_0$ \\
		\hline
		$1$ & \( I^{-} \)      & 300  &  750 &  $375^2$  \\
		$2$ & \( R_1^- \)      &  200 & 500 &  $250^2$  \\
		$3$ &\( R_2 ^- \)      & 0  & 0         & 0  \\		
		$4$ & \( V^- \)      & 0 & 0 & 0  \\\hline
		$5$ &\( S \)      &  \multicolumn{3}{c|}{ by normalization} \\		
		\hline
	\end{tabular}
	\\[2ex]
	\hspace*{0.1\textwidth}	
	\begin{tabular}{|c|c|c||c|c|c|}
		\hline
		\multicolumn{6}{|c|}{\textbf{Observable States}}\\\hline
		$i$& {Variable} & {Initial value} & $i$& {Variable} & {Initial value}\\
		& $Z^{i}$ & $Z_{0}^{i}$& &  $Z^{i}$ & $Z_{0}^{i}$ \\
		\hline
		$1$ & \( I^{+} \)  & 75 & 6 &  \( R_2^+ \) &  0\\
		$2$ & \( H \)      & 15 & 7 &  \( R_3^+ \) &  0\\
		$3$ &\(C \)        & 10 & 8 &  \( V_1\)    &  0 \\		
		$4$ & \( D \)      &  5 & 9 &  \( V_2\)    &  0\\		
		$5$ &\( R_1^+ \)   & 50 &10 &  \( V_3\)    &  0\\		
		\hline
	\end{tabular}	
	\caption{Initial values of the state variables used in the simulation of paths for the extended \covid model.
		Top: Hidden states $Y_1,\ldots,Y_5$ together with initial estimates used for filtering. Bottom: Observable states.	}
	\label{tab:init_vals}
\end{table}

\subsection{Performance of  EKF  Estimation of Unobservable States}\label{EKF_Performance}
In this subsection simulated paths for the hidden and observable states of the extended \covid model are used to evaluate the proposed EKF estimate of the hidden states. The simulated data for the hidden states serve as reference values for comparison with the estimated values based on the EKF filter with the data from the observable states as input.

The simulation based on the model parameters and initial values given in Subsection  \ref{sec:paramters}  generated paths of all states   which are displayed 
in Figure \ref{fig:f170}. As a result of the calibration of the key parameters $\beta,\alpha,\mu$ to German \covid data, the paths for the   infections $I^{+}$ and $I^-$, exhibit a similar pattern over time to that observed during the \covid pandemic in Germany. 

\begin{figure}[!tbp]
	\centering
	\subfloat[Detected infected, undetected infected, and undetected recovered.] {\includegraphics[width=0.48\textwidth,height=0.26\textwidth]{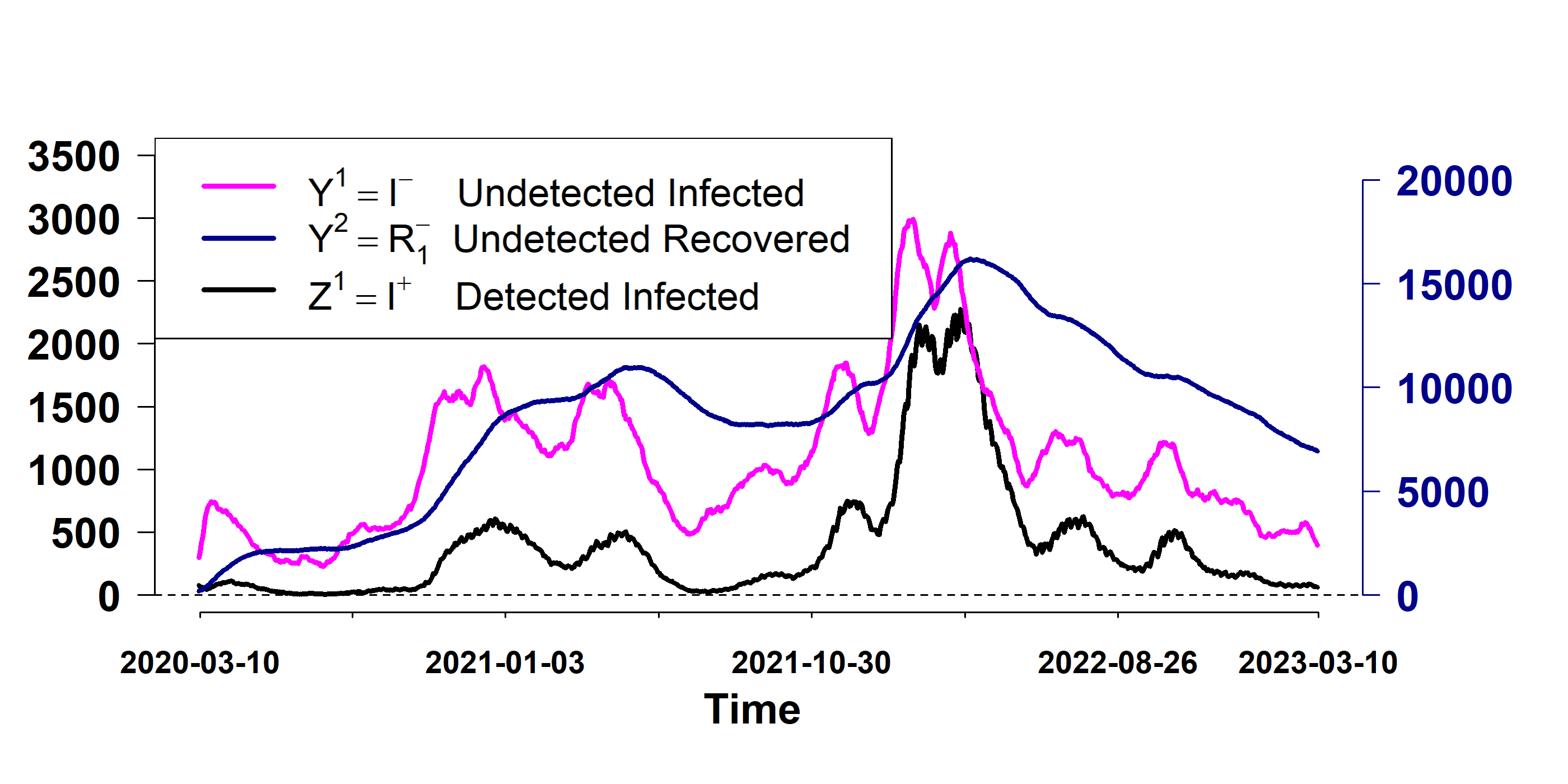}\label{fig:f171}}
	\hfill
	\subfloat[Hospitalized individuals, patients in ICU (left axis), and patients who died in ICU (right axis). ]{\includegraphics[width=0.48\textwidth,height=0.26\textwidth]{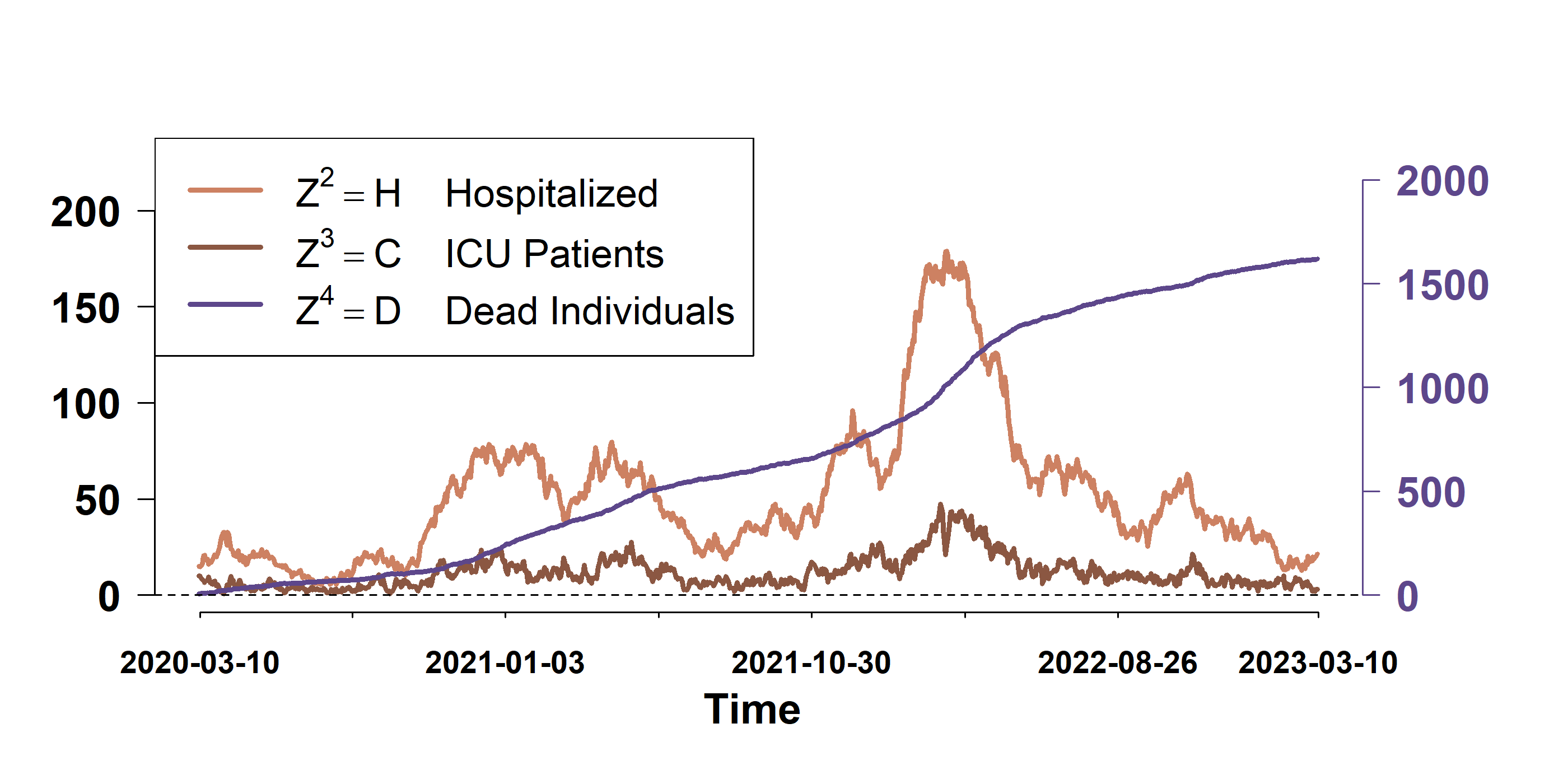}\label{fig:f172}}\\
	
	\subfloat[Observed recovered   $R^+_1,R^+_2,R^+_3$ (left axis).\\  Recovered with fading immunity (hidden) $R^-_2$  (right axis). ]{\includegraphics[width=0.48\textwidth,height=0.26\textwidth]{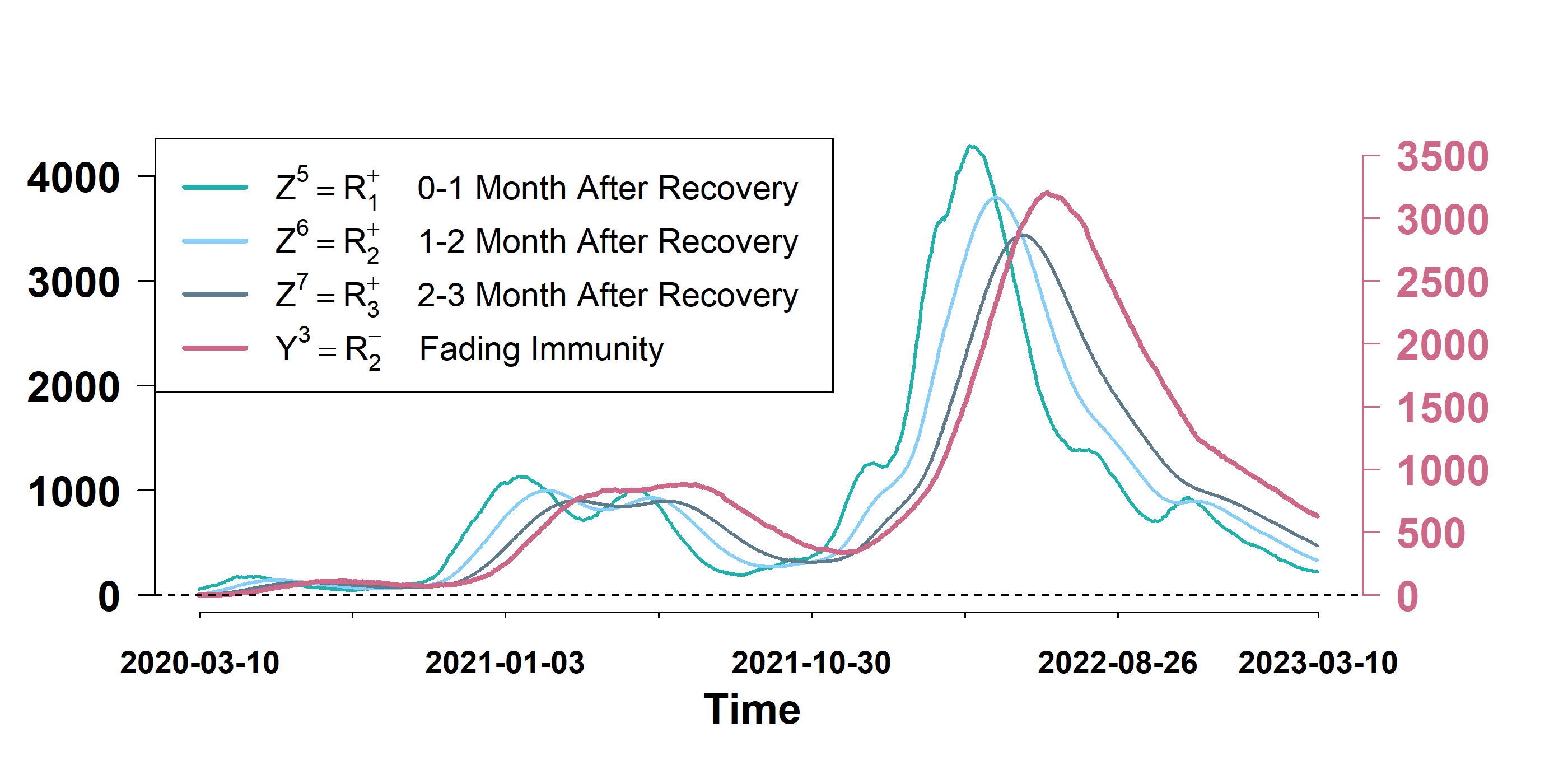}\label{fig:f173}}
	\hfill
	\subfloat[Observed vaccinated $V_1,V_2,V_3$ (left axis).\\ Vaccinated with fading immunity (hidden) $V^-$ (right axis). ]{\includegraphics[width=0.48\textwidth,height=0.26\textwidth]{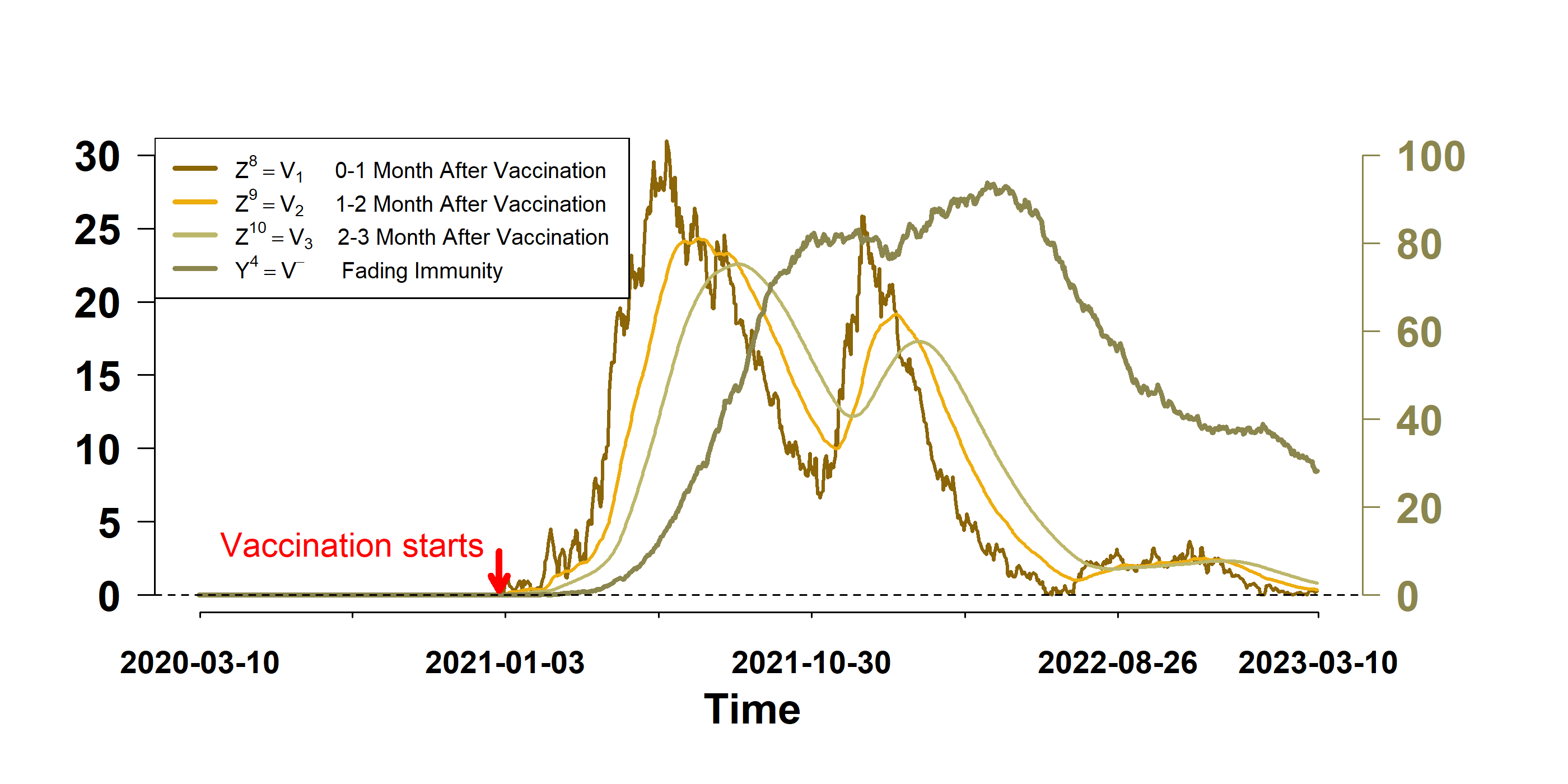}\label{fig:f174}}\\
	
	\caption{Simulated paths of hidden and observable states of the extended \covid model.}
	\label{fig:f170}
\end{figure}

We now apply the EKF Algorithm~\ref{Algo_EKF} to estimate the unobservable  states and focus on the performance evaluation of the  compartment $I^-$ of undetected infected which is the crucial state for monitoring the epidemic and  preventive action planning and resource allocation. 
Figure \ref{fig:f175}  displays  the true hidden state $I^-$, the filter estimate $\condmeanEKF_1=\condmeanEKF^{I^-}$, and the associated \(95\%\) confidence band, computed as \(\condmeanEKF^{1}_{n} \pm z_{0.975} (\condvarEKF^{11}_{n})^{1/2}\), where \(z_{0.975}\) denotes the quantile of order \(0.975\) of the standard normal distribution. 
\begin{figure}[h]
	\centering
	\includegraphics[width=1\textwidth]{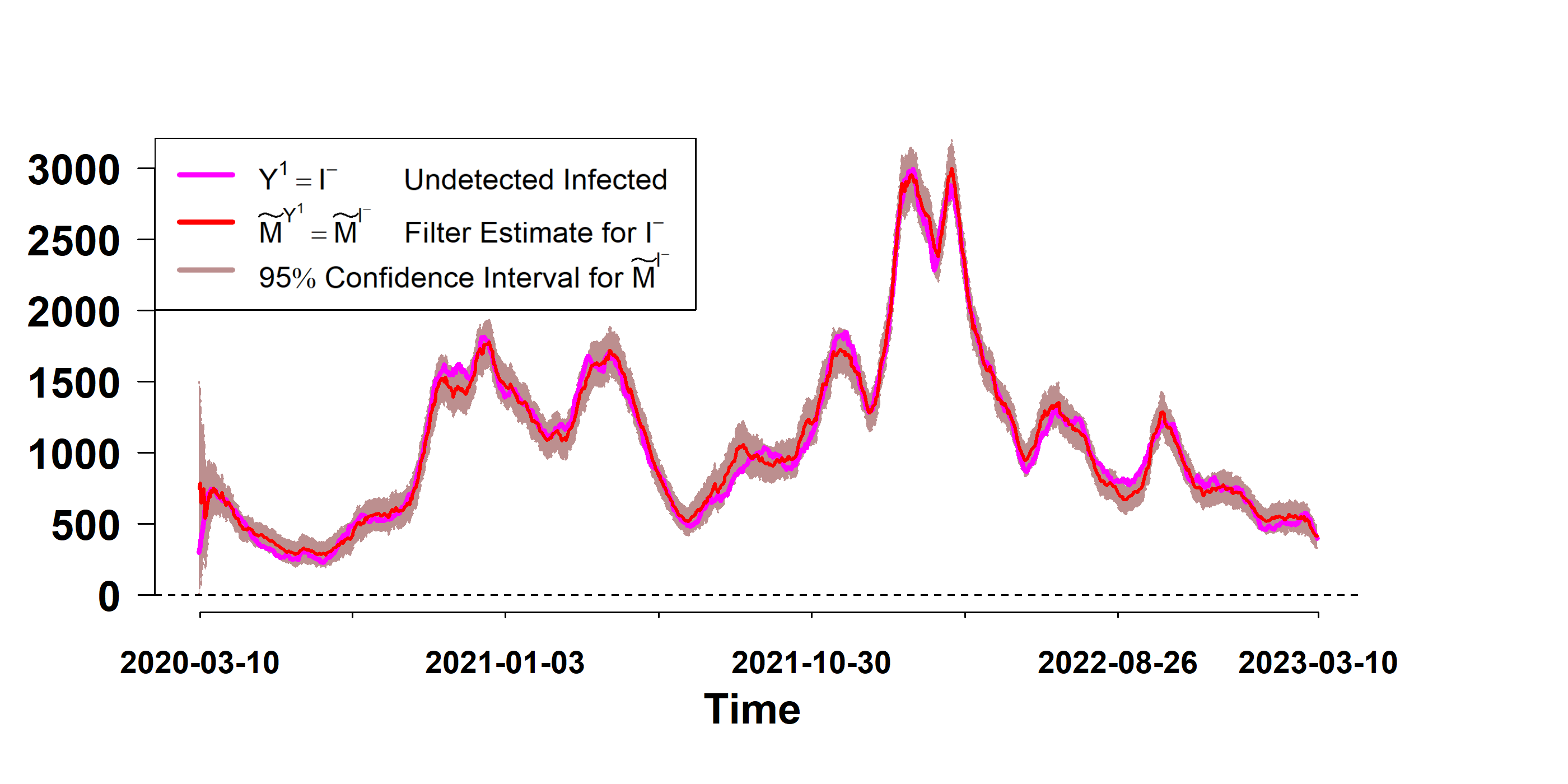}
	\caption{  True hidden state $I^-$, filter estimate $\condmeanEKF_1=\condmeanEKF^{I^-}$, and associated \(95\%\) confidence band showing that large initial uncertainty is reduced by learning from observations.
	}
	\label{fig:f175}
\end{figure}
It can be seen  that the filter estimate closely tracks the true signal, highlighting the accuracy and reliability of the EKF in capturing the underlying epidemic dynamics. This figure reveals that the initially rather large  estimation error diminishes rapidly as the filtering process begins. 
In addition, the confidence band provides insight into the accuracy of the filter estimates. At the beginning, there is a high degree of uncertainty, which is reflected in the considerable width of the confidence band. However, this confidence band narrows considerably within a few days, demonstrating the increasing accuracy of the filter as more observations are processed. After this initial “learning and warm-up phase,” the filter accuracy stabilizes, and the confidence interval reaches a relatively constant width. The above mentioned effects can also be seen from conditional standard deviations shown in Figure \ref{fig:f117}.

\begin{figure}[h]
	\centering
	\includegraphics[width=0.9\textwidth]{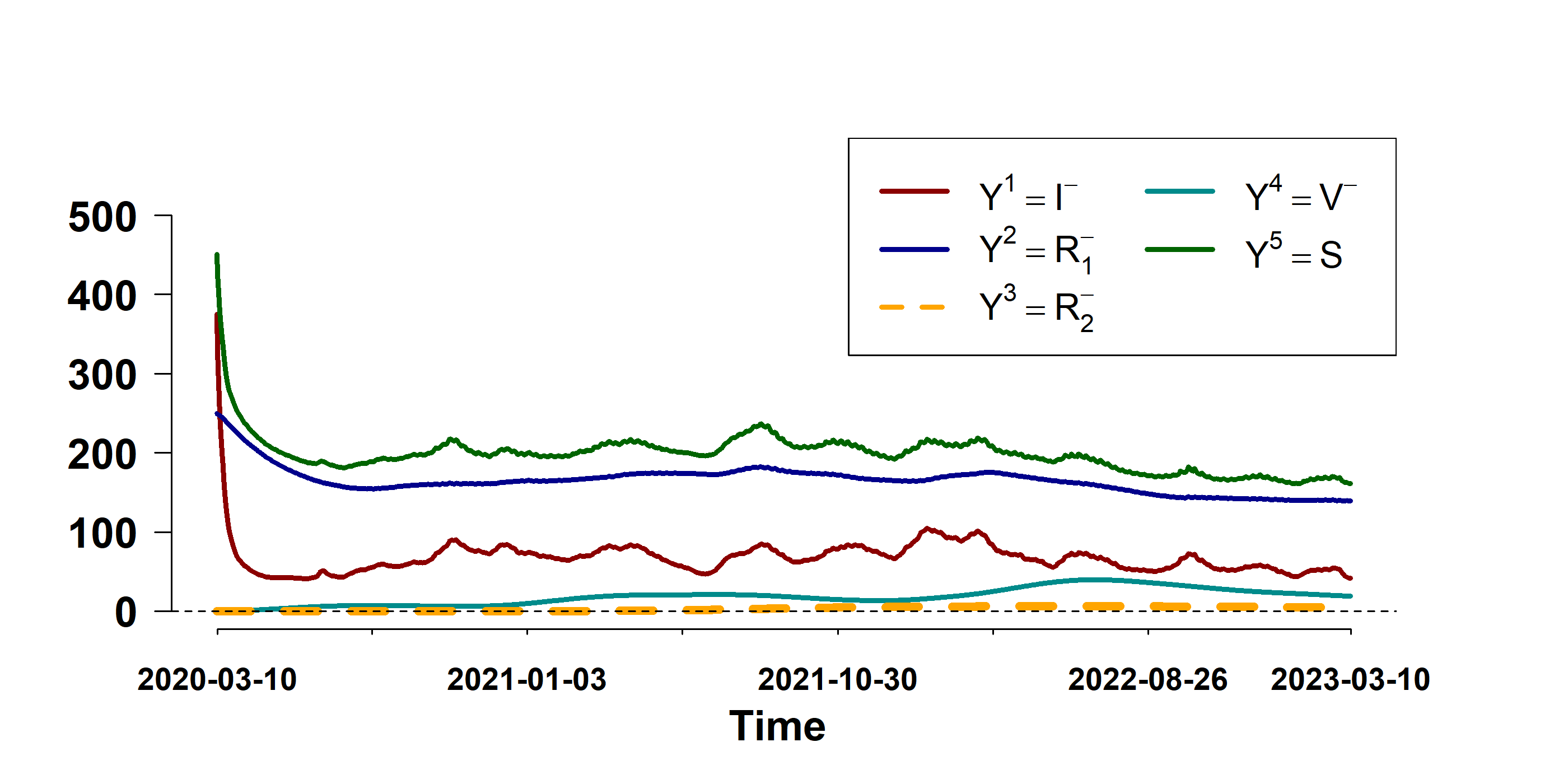}
\caption{  Conditional  standard deviations of the filter estimates of the five hidden states}	
	\label{fig:f117}
\end{figure}

\subsection{Impact of  Initial Estimates }\label{Impact_IEtim}
 We will now analyze the impact of initial estimates on the performance of the proposed filter method through a series of numerical experiments. To do this, we will vary the various initial estimates, namely the conditional mean $\condmeanEKF_0=\condmean_0$ and the conditional variance $ \condvarEKF_0=\condvar_0$, and observe the filter accuracy in the short and long term. Again, we focus on the results for the compartment $I^{-}$ of undetected infected  individuals. For the other hidden compartments, we observed similar results. 
	 
\begin{table}[h!]
	\centering
	\begin{tabular}{|c|c|c|c|}
		\hline
		&  & \multicolumn{2}{|c|}{Estimation} \\
		&True value & Conditional mean & Conditional variance\\
		& $I^{-}_{0}$ & $\condmean_{0}^{I^{-}}$ &   $\condvar_0^{I^-}$  \\ \hline
		Scenario 1            & $300$            & $750$  &$ 375^2$                    \\ \hline
		Scenario 2            & $300$            & $750$     & $0$                 \\ \hline
		Scenario 3            & $300$            & $300$    & $0$                 \\ \hline
	\end{tabular}
	\caption{Different scenarios  for the  initials estimates $\condmean_{0}^{I^{-}}$ and    $\condvar_0^{I^-}$}
	\label{tab:Val_Impact}
\end{table}

We assume that the prior information encoded in $\Fprior$ used to determine $\condmean_{0}^{I^-}$ and $\condvar_{0}^{I^-}$ stems from an expert or analyst who also knows the initial observation $Z_0$. The expert's views can be translated into  the parameters of the conditional distribution of $I^-_0 $ which assumed to be Gaussian.
Below, we distinguish between the following three scenarios in which we vary $\condmean_{0}^{I^-}$ and $\condvar_{0}^{I^-}$ as specified in Table \ref{tab:Val_Impact}, but keep the other initial estimates as shown above in Table \ref{tab:init_vals}. To visualize the results, we limit ourselves to the period of the first $75$ days after the start of the pandemic. We have observed that after this period, the influence of the initial estimates no longer plays a significant role.

\begin{figure}[h]
	\centering
	\subfloat[ Scenario 1 (red): Large initial uncertainty is reduced by learning from observations.
	Scenario 2 (blue):  Incorrectly specified initial estimate with perfect accuracy  needs long time to be corrected.
	]
	{\includegraphics[width=0.48\textwidth]{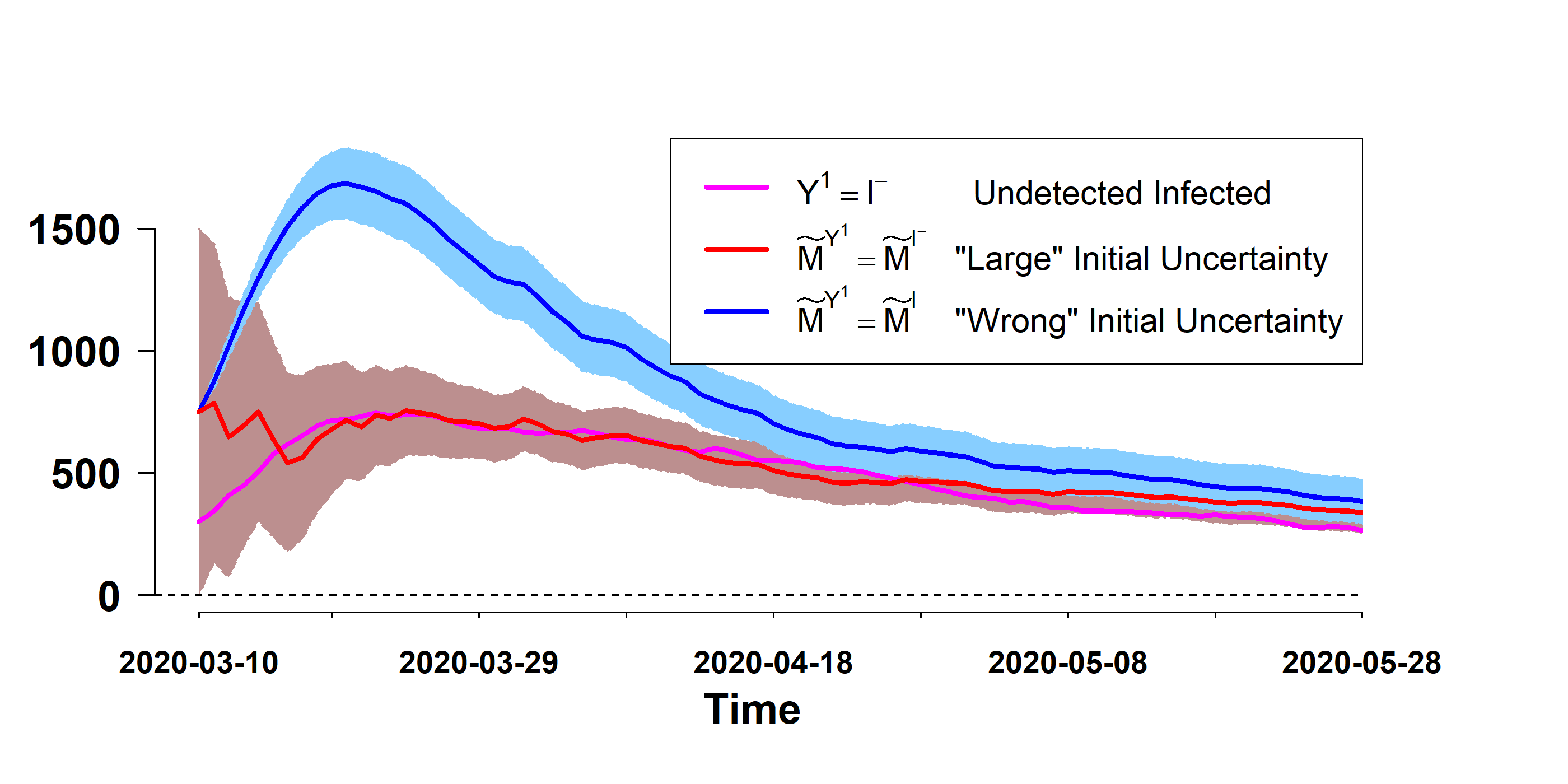}\label{fig:f179}}
	\hfill
	\subfloat[ Scenario 1 (red), scenario 3 (blue):  Zero initial uncertainty is fading out by observation noise. ]{\includegraphics[width=0.48\textwidth]{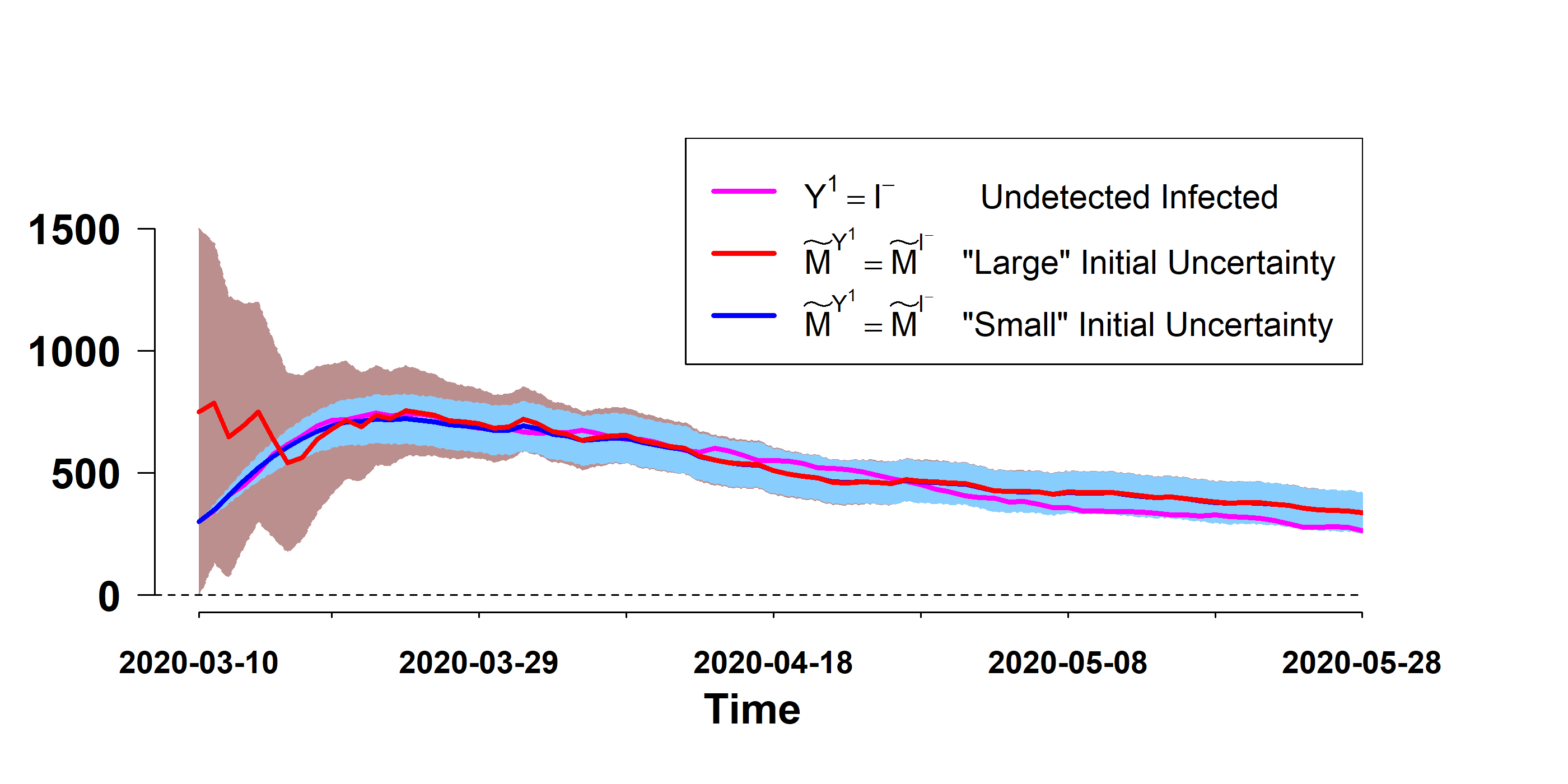}\label{fig:f180}}\\

	\caption{   
		Comparison of the effect of initial uncertainty on filtering performance. The figure shows for the three scenarios the true hidden state $I^-$, the filter estimate $\condmeanEKF_1=\condmeanEKF^{I^-}$, and the associated \(95\%\) confidence band.  }
	\label{fig:filter_initial}
\end{figure}

\paragraph{Scenario 1:  Poorly informed and uncertain expert}
This is the reference scenario and it coincides with the setting of the above subsection. 
The interpretation is that the expert gives a view in terms of the conditional distribution of the initial DFC $\dfc_{0}^{1}=I^-_0/I^+_0$ for which the mean $\dfcmean^1=10$ and variance $\dfcvar^1=25$ are specified. Since the expert is aware of the rather insufficient information required for an accurate  estimate,  a relatively large variance is specified. 

Figure \ref{fig:f179} shows what was already observed in Figure \ref{fig:f175} for the three-year period, namely that despite the relatively large initial estimation error, the filter takes about two weeks to correct itself, resulting in a fairly good accuracy compared to the actual signal. The confidence band (shown in light red), which was initially very wide (due to the large initial variance), becomes significantly narrower after a few days.

\paragraph{Scenario 2: Poorly informed but overconfident expert}
This scenario is formally  obtained from the first scenario by replacing the large DFC variance $\dfcvar^1$  with  zero. This corresponds to an overconfident expert which specifies the initial DFC distribution by an inaccurate mean $\dfcmean^1=10$, but at the same time  specifies an unrealistically perfect accuracy for the estimate.
The results are shown together with those for scenario 1 in Figure \ref{fig:f179}. In both scenarios, the filter process $\condmeanEKF^{I^-}$ starts with the same value. However, due to the incorrectly specified initial conditional variance, the filter process $\condmeanEKF^{I^-}$ now takes much longer to correct itself than in scenario 1. The confidence band (shown in light blue), which was very small (because of the almost vanishing initial variance) at first, becomes much more wide after a few days. This could be explained by the fact that, given the low variance values, the filter trusts these values and does not immediately make the correction, resulting in a longer adjustment. However, the  filter eventually adjusts and maintain fairly good accuracy over the long term.

\paragraph{Scenario 3:  Fully informed expert}
In this scenario the expert enjoys full information about the initial value $I^-_0$ at time $n=0$. That is, the view provides the accurate estimate  $\condmeanEKF^{I^-_0}=I^-_0$ together with a vanishing conditional variance $\condvarEKF^{I^-}_0$ indicating the perfect accuracy.
The results are shown in Figure \ref{fig:f180} together with those for scenario 1. We note that in this case, the signal $I^-$ must again be estimated solely from noisy observations after it was specified precisely at the initial point in time. This explains why the initially perfect accuracy deteriorates over time, leading to a progressive increase in the confidence interval (shown in light blue). Although the accuracy of the filter is no longer as high as at the beginning, it remains relatively accurate with respect to the actual signal.

\subsection{Impact of Cascade States}\label{Impact_CS}
\begin{figure}[h]
	\centering
	\subfloat[Standard deviation  \(\sqrt{\condvarEKF^{Y_{i}}}\) of hidden states over time of the estimation of $ I^- $ according to the different models. ]{\includegraphics[width=0.48\textwidth]{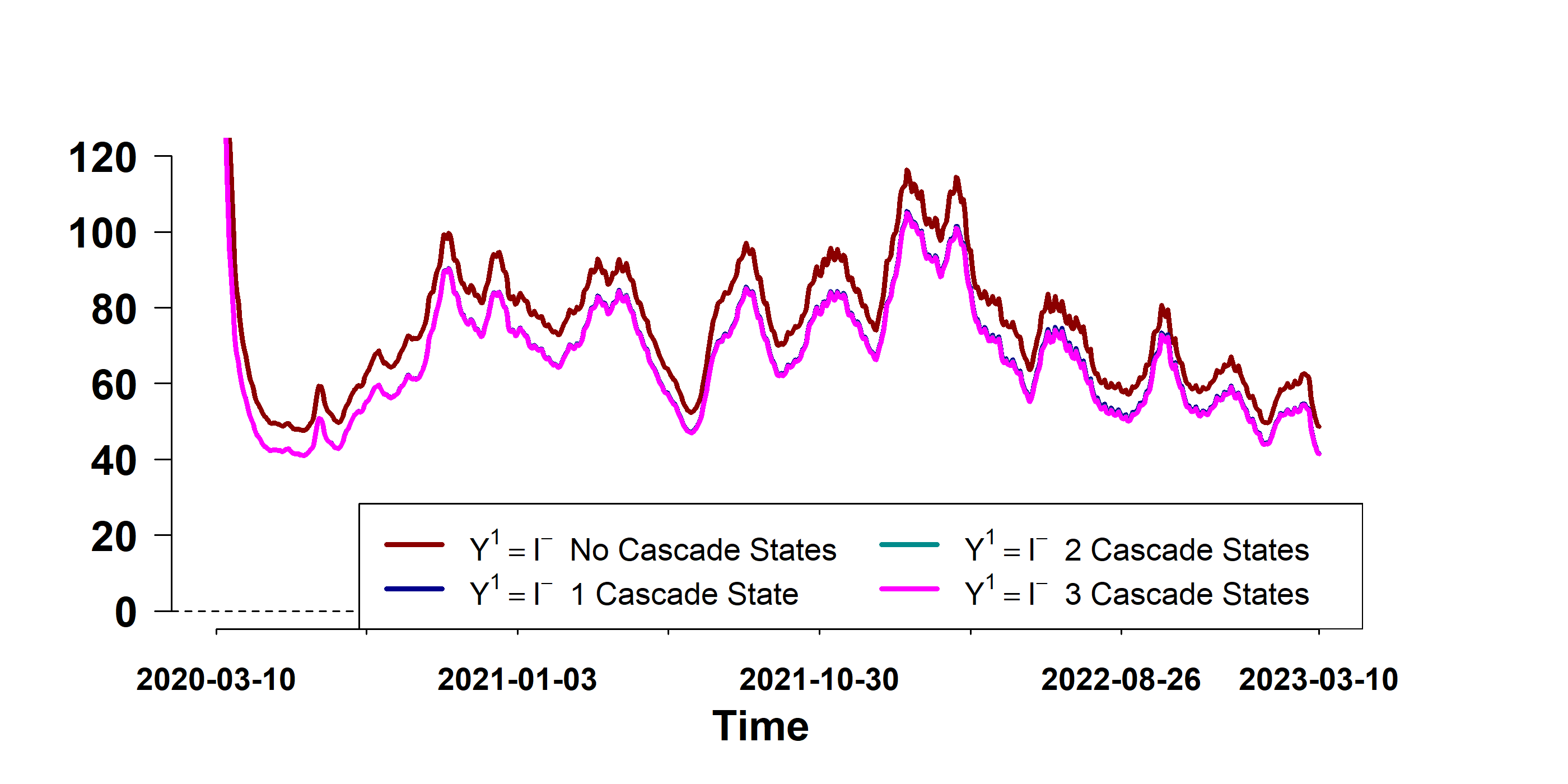}\label{chap33_2}}
	\hfill
	\subfloat[Zoom allowing the identification of all standard deviations \(\sqrt{\condvarEKF^{Y_{i}}}\). ]{\includegraphics[width=0.48\textwidth]{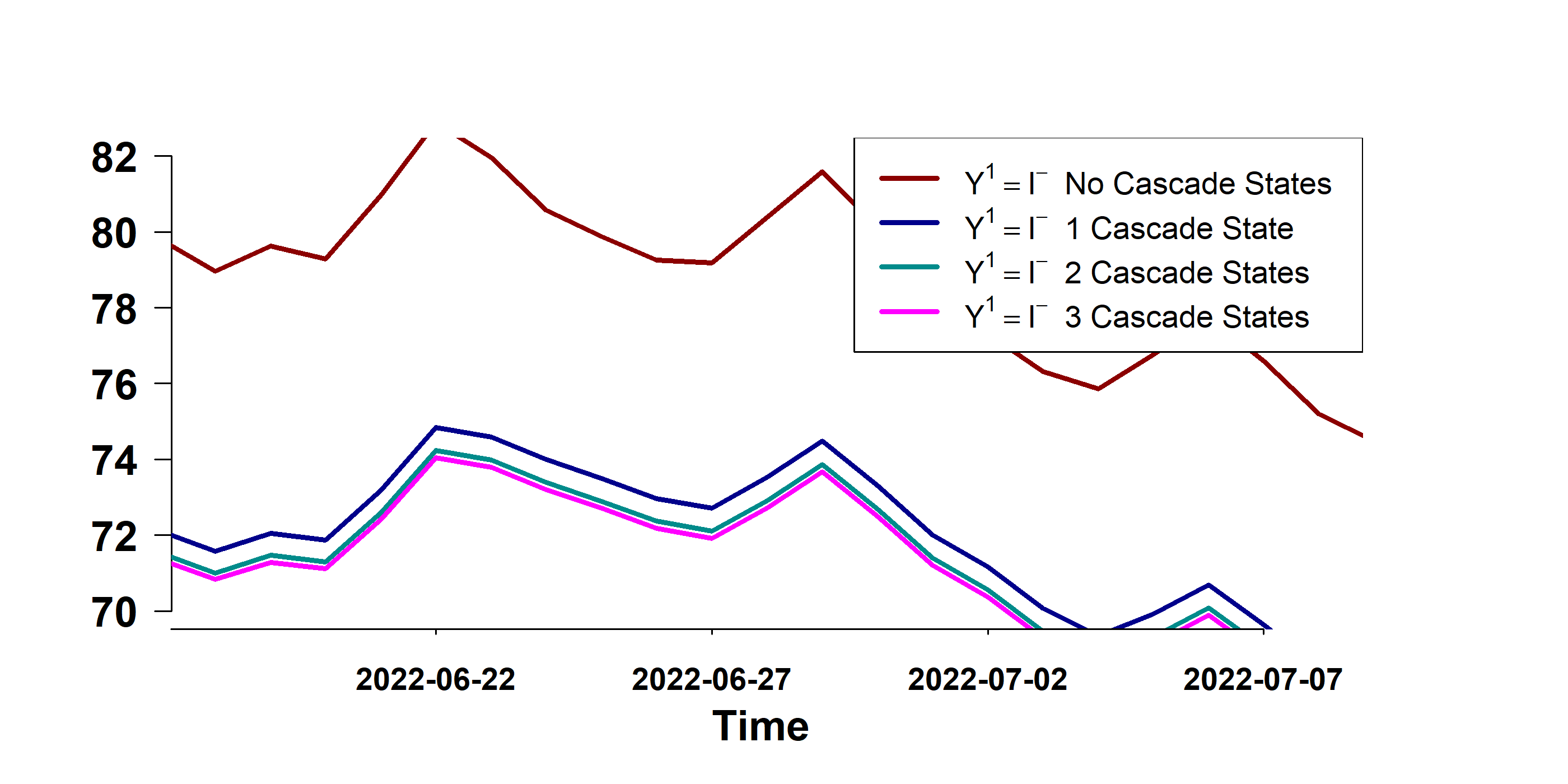}\label{chap33_3}}\\

	\caption{Standard deviation evolution over time }\label{chapp33}
\end{figure}

In this section, we assess how incorporating cascade compartments influences the performance of the filter, particularly with respect to estimation accuracy of the hidden states measured by the standard deviation. We compare the base model with the extended model equipped with one, two and three cascade compartments.
In both models, individuals who have recovered from a confirmed infection or have been vaccinated are classified as completely immune for $\LR=\LV=90$ days. In the base model they are  assigned to the compartments  $R^+$ and $V$, respectively. They are therefore not counted separately, but together with individuals for whom the time since recovery or vaccination exceeds $90$ days, and which can undergo an unobservable  transition  to the compartment  of susceptibles. Therefore, the compartments  $R^+$ and $V$ must be treated as hidden.
 In contrast, the extended \covid models includes  observable cascade compartments, allowing for a more detailed representation of the post-recovery and post-vaccination phases.

To ensure consistency, the same initial values are used across models for the hidden compartments $S,I^-$, and $R^{-}_{1}$ respectively $R^-$. For the observable recovered cascade compartments in the extended models, the initial value of  the first compartment $R^{+}_1$ is set to $ 50 $, while the others are initialized to zero. In the base model, the (partially) hidden compartment $R^+$ is also initialized to $ 50 $.
When cascade compartments are introduced, this the period of  $\LR=\LV=90$ days of complete immunity  is divided into $\KR = \KV = 1, 2,$ or $3$ subperiods, corresponding respectively to one compartment of 90 days, two compartments of 45 days each, or three compartments of 30 days each. 

 The immunity loss rates are chosen so that the average time spent in the immune state until immunity is lost and the transition to $S$ is the same for the basic and extended models. For immunity loss after recovery, we set $\rho^V=1/290$ in the basic model and, as above, $\rho^V=1/200$ in the extended model. Then the average duration of immunity after vaccination is $290=\LV+200$ in both cases.  For immunity loss after recovery, we set $\rho^{-}_{2} = 1/120$ in the base model and, as in the above experiments, $\rho^{-}_{2} = 1/30$ in the extended model. Here, too, the expected duration is the same and amounts to $120=\LR+30$.

Figure \ref{chap33_2} illustrates the evolution of the standard deviation over time for each model variant. As expected, the base model lacking additional observable compartments exhibits the highest variance. This highlights the critical role of incorporating observable states in enhancing estimation accuracy.
Interestingly, we observe that the majority of variance reduction is achieved by introducing the first observable cascade state. Adding a second or third cascade compartment yields diminishing returns. This likely stems from the fact that transitions between successive cascade states are deterministic; they do not introduce new randomness or observational information that can be exploited by the filter. To provide a more refined comparison, Figure \ref{chap33_3} offers a zoomed-in view of the standard deviations across models. The results are consistent with those obtained in earlier experiments.

%\newpage 
%\clearpage
\begin{appendix}	
\section*{Appendix}
\addcontentsline{toc}{section}{Appendix}
{\footnotesize

\section{Notation}
	\begin{longtable}{ll}%{p{0.15\textwidth}p{0.68\textwidth}}
		\covid &  Coronavirus disease 2019 \\
		SDE &  Stochastic differential equation \\
		CTMC &    Continuous-time Markov chain \\
		ICU & Intensive care unit\\
		%LLN &  Law of large numbers \\
		%CLT &  Central limit theorem \\
		EKF &  Extended Kalmnan filter \\[0.5ex]
		$\alpha$ & Test rate   \\
		$\beta$ & Transmission rate \\
		$\gamma^\dagger$ &  Recovery rate, $\dagger=-,+,H,C$ \\
		$\delta$ &  Hospitalization rate in the ICU \\
		$\Delta t$ &  time step\\
		$\Count_{k}$ &    Counting process (for the number of transition $k$ in $[0,t]$)\\
		%$\epsilon,\overline{\epsilon}$ & Filter error / Expected filter error   \\
		$\eta^{-},\eta^{+}$ &  Hospitalization rate for undetected / detected infected \\
		$\kappa$ & Death rate from ICU  \\
		$\lambda$ &  Intensity rate w.r.t absolute population sizes  $X$\\
		$\nu$ &   Intensity rate w.r.t. relative population sizes $\overline{X}$\\
		$\Pi$ &   Standard Poisson process with unit intensity\\		
		$\psiR,\psiV$ &    Weight factors in recursions of cascade state dynamics \\
		$\rho_^^ -,\rho_2^-,\rho^V$ &   Losing immunity rate\\		
		$\sigma$ &    Second diffusion coefficient, hidden state\\
		$\diffX$ & Diffusion coefficient of diffusion approximation $X^D$\\
		$\Noise_{n} ^1,\Noise_{n}^2$ &    Independent   $\mathcal{N}(0,\one)$ random vectors\\
		$C$ &  Individuals in the ICU \\
		$D$ &  Individuals  who died from the disaese in ICU \\
		$d,d_1,d_2$ &  Number of all, hidden and observable states  \\		   
		\ $\KR,\KV$ &  Number of cascade compartments \\
		$f, f_0,f_1$ &    Drift coefficient, hidden state\\
		  $\driftX$ &      Drift  coefficient of diffusion approximation $ X^D$\\			
		$g$ &    First diffusion coefficient, hidden state\\
		$H$ &   Hospitalized individuals \\
		$h, h_0,h_1$ &    Drift coefficient, observable state \\				
		$I^{-},I^{+}$ & undetected  / detected  infected \\
		$\one_{n}$ & $n\times n$ Identity matrix of order $n$ \\
		$K$ &  Total number of different transitions  \\
		$ \LR,\LV$ &  Number of time steps with complete immunity\\
		$\ell$ &    Diffusion coefficient, observable  state\\
		$\condmean,\condmeanEKF$ & Conditional mean/ EKF approximation\\
		$N$ &  Total population size \\
		$N_t$ &   Total number of time steps\\
		$\condvar,\condvarEKF$ & Conditional variance/ EKF approximation\\
		$\PR_j,\PV_j$ &   Number of  of original cascade compartments grouped to one compartment \\
		$R^{-},R^{+}$ & Undetected/detected recovered   in base model   \\
		$R^{-}_1,R^{-}_2$ & Undetected recovered/  detected recovered  with fading immunity  in extended model   \\
		$R_{j}^{+}$ & Cascade compartment with respect to observable recovered \\		
		$S$ &   Susceptible  \\		
		$T, t, t_0,\ldots,t_{N_t}$ &   Time horizon/ time/   discrete time points \\
		$\dfc$ &  Dark figure coefficient\\ 
		$V,V^{-}$ & Vaccinated/ vaccinated with fading immunity    \\
		$V_{j}^{}$ & Cascade compartment with respect to vaccinated individuals \\		
		$W, W^{1},W^{2}$ &   multi-dimensional standard Brownian motions\\
		$X,\overline{X}$ &  State vector for absolute and relative subpopulation size \\
		$Y,\widetilde{Y}$ & Hidden state in original/linearized system \\
		$\overline{Y}$ & Reference point for Taylor expansion \\
		$Z,\widetilde{Z}$ &  Observable state in original/linearized system \\		
	\end{longtable}

\section{Coefficients of the Recursion of the State Process}
\label{app:coeff}
\subsection{Base \covid  Model}
\label{app:base_coeff}
 Below we give the coefficients $f,h,\sigma,g,\ell$  appearing in the recursions \eqref{state_YZ} for the base model introduced in Subsection \ref{sec:BaseModel} with $d=9$ states, $K=16$ transitions,  the hidden state $Y=(I^-,R^-,R^+,V,S)^{\top}$ of dimension $d_1=5$, and the observable state $Z=(I^+,H,C,D)^{\top}$ of dimension $d_2=4$. The dimension of the Gaussian random vectors $\Noise^1,\Noise^2$  are $k_1=k_2=8$.

{\footnotesize
	\begin{align}
		\label{drift_signal_base}
		f(n,y,z)& = y+\left(\begin{array}{c}		
			\beta\frac{y^1y^5}{N} -(\alpha + \gamma^{+} + \eta^{-} - \mu )y^1\\
			\gamma^{-}y^1 -\mu y^2 -\rho^{-}_{1}y^2\\
			\gamma^{+}z^1 + \gamma^{H}z^2 + \gamma^{C}z^3 - \rho^{-}_{2}y^3\\
			(y^1 + y^2 + y^5)\mu - \rho^{V}y^4\\
			-\beta\frac{y^1y^5}{N} -\mu y^5 +\rho^{-}_{1}y^2 +\rho^{-}_{2}y^3 +\rho^{V}y^4
		\end{array} \right)\Delta t, ~~\\[1em]
		  h(n,y,z)& =  h_{0}(n,z) + h_{1}(n,z)y \quad \text{with}\\[1em]
		h_{0}(n,z)& =z+
		\left(\begin{array}{c}
			-\gamma^{+}z^1 - \eta^{+}z^1\\[0.7ex]
			\eta^{+}z^1 - \delta z^2 - \gamma^{H}z^2\\
			\delta z^2 - \gamma^{C}z^3 -\kappa z^3\\
			\kappa z^3
		\end{array} \right)\Delta t,		
		\quad
		h_{1}(n,z)
		= \left(\begin{array}{ccccc}
			\alpha & 0 & 0&0 &0\\
			\eta^{-} & 0 & 0 & 0 & 0\\
			0 & 0 & 0 & 0 & 0\\
			0 & 0 & 0 & 0 & 0
		\end{array} \right)	\Delta t	
		\\[1em] 
		\sigma(n,y,z)&=\left(\begin{array}{c@{\hspace*{-0.001cm}}c@{\hspace*{-0.001cm}}c@{\hspace*{-0.001cm}}c@{\hspace*{-0.001cm}}c@{\hspace*{-0.001cm}}c@{\hspace*{-0.001cm}}c@{\hspace*{-0.001cm}}c}			
			\sqrt{\beta\frac{y^1y^5}{N}} & -\sqrt{\gamma^{-}y^1} &  \sqrt{\mu y^1} & 0 & 0 & 0 & 0 & 0\\
			0 & \sqrt{\gamma^{-}y^1} & 0 & -\sqrt{\mu y^2} & -\sqrt{\rho^{-}_{1}y^2} & 0 & 0 & 0 \\
			0 & 0 & 0 & 0 & 0 & -\sqrt{\rho^{-}_{2}y^3} & 0 & 0\\
			0 & 0 & \sqrt{\mu y^1} & \sqrt{\mu y^2} & 0 & 0 & \sqrt{\mu y^5} & -\sqrt{\rho^{V} y^4} \\
			-\sqrt{\beta\frac{y^1y^5}{N}} & 0 & 0 & 0 & \sqrt{\rho^{-}_{1}y^2} & \sqrt{\rho^{-}_{2}y^3} & -\sqrt{\mu y^5} & \sqrt{\rho^{V}y^4} 
		\end{array} \right)\sqrt{\Delta t}\\[1em]
		g(n,y,z)&=\left(\begin{array}{c@{\hspace*{-0.001cm}}c@{\hspace*{-0.001cm}}c@{\hspace*{-0.001cm}}c@{\hspace*{-0.001cm}}cccc}		
			-\sqrt{\alpha y^1} & -\sqrt{\eta^{-}y^1} & 0 & 0 & 0 & 0 & 0 & 0\\
			0 & 0 & 0 & 0 & 0 & 0 & 0 & 0\\
			0 & 0 & \sqrt{\gamma^{+}z^1} & \sqrt{\gamma^{H}z^2} & \sqrt{\gamma^{C}z^3} & 0 & 0 & 0\\
			0 & 0 & 0 & 0 & 0 & 0 & 0 & 0\\
			0 & 0 & 0 & 0 & 0 & 0 & 0 & 0
		\end{array} \right)\sqrt{\Delta t}\\
		\ell(n,y,z)&=\left(\begin{array}{c@{\hspace*{-0.001cm}}c@{\hspace*{-0.001cm}}c@{\hspace*{-0.001cm}}c@{\hspace*{-0.001cm}}c@{\hspace*{-0.001cm}}c@{\hspace*{-0.001cm}}c@{\hspace*{-0.001cm}}c}	
			\sqrt{\alpha y^1} & 0 & -\sqrt{\gamma^{+}z^1} & 0 & 0 & -\sqrt{\eta^{+}z^1} & 0 & 0 \\
			0 & \sqrt{\eta^{-}y^1} & 0 & -\sqrt{\gamma^{H}z^2} & 0 & 0 & 0 & -\sqrt{\delta z^2}\\
			0 & 0 & 0 & 0 & -\sqrt{\gamma^{C}z^3} & 0 & -\sqrt{\kappa z^3} & \sqrt{\delta z^2}\\
			0 & 0 & 0 & 0 & 0 & 0 & \sqrt{\kappa z^3} & 0
		\end{array} \right)\sqrt{\Delta t}
	\end{align}
The coefficients $f_0,f_1$ appearing in 	Lemma \ref{lem:linearized_system} and resulting from the linearization of the signal drift coefficient $f$ given above in \eqref{drift_signal_base} are given by  
\begin{align}
\label{drift_signal_linear_base}
\begin{split}	
	f_0(n,y,z)& =
	\left(\begin{array}{c}		
		-\beta\frac{y^{1}y^{5}}{N} \\
		0\\
		\gamma^{+}z^{1} + \gamma^{H}z^{2}+\gamma^{C}z^{3} \\
		0 \\
		\beta\frac{y^{1}y^{5}}{N} \\
	\end{array} \right),
	\\[1ex]
	f_1(n,y,z)& = \one_{5} +\left(\begin{array}{ccccc@{\hspace*{-0.0em}}}			
		\beta\frac{y^{5}}{N} -(\alpha+\gamma^{-}+\eta^{-}+\mu)& 0 &   0 & 0 &  \beta\frac{y^{1}}{N} \\
		\gamma^{-} & -(\mu + \rho_{1}^{-}) &  0 & 0 &   0  \\
		0 & 0 &   -\rho^{-}_{2} & 0 &   0  \\
		\mu & \mu &   0 & -\rho^{V}  & \mu  \\
		-\beta\frac{y^{5}}{N} & \rho_{1}^{-} &   \rho_{2}^{-} & \rho^{V} & -\beta\frac{y^{1}}{N} -\mu
	\end{array} \right) .
\end{split}			
\end{align}

}%\footnotesize
	
\subsection{Extended \covid Model}
\label{app:ext_coeff}	
 Below we give the coefficients $f,h,\sigma,g,\ell$  appearing in the recursions \eqref{state_YZ} for the extended model for $\KR=\KV=3$ cascade states introduced in Subsection \ref{sec:ExtendedModel} with $d=15$ states, $K=21$ transitions, among them are $\KR+\KV=6$ deterministic transitions related to the cascade states.   The hidden state $Y=(I^-, R^{-}_{1}, R^{-}_{2}, V^{-}, S)^\top $ is of dimension $d_1=5$, and the observable state  $Z=(I^+,H,C,D,R^+_1,R^+_2,R^+_3,V_1,V_2,V_3)^{\top}$ of dimension $d_2=10$. The dimension of the Gaussian random vectors $\Noise^1,\Noise^2$  are $k_1=5$, and $k_2=11$, respectively. 

{\footnotesize
	\begin{align}
		\label{drift_signal_ext}
		f(n,y,z)& =y+ \left(\begin{array}{c}		
			\beta\frac{y^{1}y^{5}}{N} -(\alpha + \gamma^{-} + \eta^{-} - \mu )y^{1}\\
			\gamma^{-}y^{1} -\mu y^{2} -\rho^{-}_{1}y^{2}\\
			- \rho^{-}_{2}y^{3}   \\
			- \rho^{V}y^{4} \\
			-\beta\frac{y^{1}y^{5}}{N} -\mu y^{5} +\rho^{-}_{1}y^{2} +\rho^{-}_{2}y^{3} +\rho^{V}y^{4}\\
		\end{array} \right)\Delta t + \left(\begin{array}{c}		
			0\\
			0\\
			\psi_{3}z^{7}  \\
			\psi_{3}z^{10} \\
			0\\
		\end{array} \right)\\[0.5ex]
		  h(n,y,z)& =  h_{0}(n,z) + h_{1}(n,z)y \quad \text{with}\\[0.5ex]	
		h_{0}(n,z)& = z+
		\left(\begin{array}{c}		
			-\eta^{+}z^{1} -\gamma^{+}z^{1}\\
			\eta^{+}z^{1}  -\delta z^{2} -\gamma^{H}z^{2}\\
			\delta z^{2} - \gamma^{C}z^{3} -\kappa z^{3}\\
			\kappa z^{3}\\
			\gamma^{+}z^{1} + \gamma^{H}z^{2} + \gamma^{C}z^{3}  \\
			0   \\
			0 \\
			0 \\
			0\\
			0 
		\end{array} \right)\Delta t + \left(\begin{array}{c}		
			0\\
			0\\
			0\\
			0\\
			-\psiR_{1}z^{5}\\
			\psiR_{1}z^{5}  -\psiR_{2}z^{6} \\
			\psiR_{2}z^{6}  -\psiR_{3}z^{7} \\
			-\psiV_{1}z^{8} \\
			\psiV_{1}z^{8}  -\psiV_{2}z^{9}\\
			\psiV_{2}z^{9}  -\psiV_{3}z^{10} 
		\end{array} \right), \quad    		
		h_{1}(n,z) 
		=\left(\begin{array}{ccccc@{\hspace*{-0.0em}}}			
			\alpha &  0 & 0 & 0 & 0  \\
			\eta^{-} & 0 & 0 & 0 & 0\\
			0 & 0 & 0 & 0 & 0 \\
			0 & 0 & 0 & 0 & 0\\
			0 & 0 & 0 &  0 & 0 \\
			0 & 0 & 0 &  0 & 0 \\
			0 & 0 & 0 & 0 & 0 \\
			\mu  & \mu  &0 &  0 & \mu\\
			0 & 0 & 0 & 0 & 0 \\
			0 & 0 & 0 &  0 & 0
		\end{array} \right)
		\Delta t \\[1ex]
		\sigma(n,y,z)&=\left(\begin{array}{ccccc}			
			\sqrt{\beta\frac{y^{1}y^{5}}{N}} & -\sqrt{\gamma^{-}y^{1}} &   0 & 0 &   0 \\
			0 & \sqrt{\gamma^{-}y^{1}} &   -\sqrt{\rho^{-}_{1}y^{2}} & 0 &   0  \\
			0 & 0 &   0 & -\sqrt{\rho^{-}_{2}y^{3}} &   0  \\
			0 & 0 &   0 & 0  & -\sqrt{\rho^{V} y^{4}}  \\
			-\sqrt{\beta\frac{y^{1}y^{5}}{N}} & 0 &   \sqrt{\rho^{-}_{1}y^{2}} & \sqrt{\rho^{-}_{2}y^{3}} & \sqrt{\rho^{V} y^{4}}  
		\end{array} \right)\sqrt{\Delta t} \\[1ex]
		g(n,y,z)& =\left(\begin{array}{ccccccccccc}			
			-\sqrt{\mu y^{1}}  &    0 & 0 & -\sqrt{\alpha y^{1}} & -\sqrt{\eta^{-}y^{1}} & 0 & 0 & 0 & 0 & 0 & 0\\
			0 & -\sqrt{\mu y^{2}} & 0  & 0 & 0 & 0 & 0 & 0 & 0 & 0 & 0\\
			0 & 0 &  0 & 0 & 0 & 0 & 0 & 0 & 0 & 0 & 0\\
			0 & 0 & 0 &  0 & 0 & 0 & 0 & 0 & 0 & 0 & 0\\
			0 & 0 & -\sqrt{\mu y^{5}} &  0 & 0 & 0 & 0 & 0 & 0 & 0 & 0 
		\end{array} \right)\sqrt{\Delta t}\\[1ex]
		\ell(n,y,z)&=\left(\begin{array}{c@{\hspace*{-0.01cm}}c@{\hspace*{-0.01cm}}c@{\hspace*{-0.09cm}}c@{\hspace*{-0.05cm}}c@{\hspace*{-0.09cm}}c@{\hspace*{-0.05cm}}c@{\hspace*{-0.09cm}}c@{\hspace*{-0.05cm}}c@{\hspace*{-0.09cm}}c@{\hspace*{-0.05cm}}c@{\hspace*{-0.09cm}}c@{\hspace*{-0.05cm}}c@{\hspace*{-0.09cm}}c@{\hspace*{-0.05cm}}c@{\hspace*{-0.09cm}}c@{\hspace*{-0.05cm}}c@{\hspace*{-0.09cm}}c@{\hspace*{-0.05cm}}c@{\hspace*{-0.09cm}}c}			
			0 &  0 & 0 & \sqrt{\alpha y^{1}} & 0 & -\sqrt{\gamma^{+}z^{1}} & 0 & 0 & -\sqrt{\eta^{+}z^{1}} & 0 & 0 \\
			0 & 0 & 0 & 0 & \sqrt{\eta^{-}y^{1}} & 0 & -\sqrt{\gamma^{H}z^{2}} & 0 & \sqrt{\gamma^{+}z^{1}} & 0 & -\sqrt{\delta z^{2}}\\
			0 & 0 & 0 & 0 & 0 & 0 & 0 & -\sqrt{\gamma^{C}z^{3}} & 0 & -\sqrt{\kappa z^{3}} & \sqrt{\delta z^{2}}\\
			0 & 0 & 0 & 0 & 0 & 0 & 0 & 0 & 0 & \sqrt{\kappa z^{3}} & 0\\
			0 & 0 & 0 &  0 & 0 & \sqrt{\gamma^{+}z^{1}} & \sqrt{\gamma^{H}z^{2}} & \sqrt{\gamma^{C}z^{3}} & 0 & 0 & 0\\
			0 & 0 & 0 &  0 & 0 & 0 & 0 & 0 & 0 & 0 & 0\\
			0 & 0 & 0 & 0 & 0 & 0 & 0 & 0 & 0 & 0 & 0\\
			\sqrt{\mu y^{1}} & \sqrt{\mu y^{2}} &\sqrt{\mu y^{5}} &  0 & 0 & 0 & 0 & 0 & 0 & 0 & 0\\
			0 & 0 & 0 & 0 & 0 & 0 & 0 & 0 & 0 & 0 & 0\\
			0 & 0 & 0 &  0 & 0 & 0 & 0 & 0 & 0 & 0 & 0
		\end{array} \right)\sqrt{\Delta t}
	\end{align}	
The coefficients $f_0,f_1$ appearing in 	Lemma \ref{lem:linearized_system} and resulting from the linearization of the signal drift coefficient $f$ given above in \eqref{drift_signal_ext} are given by 	
	\begin{align}
		\label{drift_signal_linear_ext}
		\begin{split}
			f_0(n,y,z)&=\left(\begin{array}{c}		
				-\beta\frac{y^{1}y^{5}}{N} \\
				0\\
				\psi_{3}z^{7}  \\
				\psi_{3}z^{10} \\
				\beta\frac{y^{1}y^{5}}{N} \\
			\end{array} \right),\\[1ex]
			f_1(n,y,z)&=  \one_{5} + \left(\begin{array}{ccccc@{\hspace*{-0.0em}}}			
				\beta\frac{y^{5}}{N} -(\alpha+\gamma^{-}+\eta^{-}-\mu)& 0 &   0 & 0 &  \beta\frac{y^{1}}{N} \\
				\gamma^{-} & -(\mu + \rho_{1}^{-}) &  0 & 0 &   0  \\
				0 & 0 &   -\rho^{-}_{2} & 0 &   0  \\
				0 & 0 &   0 & -\rho^{V}  & 0  \\
				-\beta\frac{y^{5}}{N} & \rho_{1}^{-} &   \rho_{2}^{-} & \rho^{V} & -\beta\frac{y^{1}}{N} -\mu
			\end{array} \right).
		\end{split}		
	\end{align}

}%\footnotesize

\section{Proof  of Lemma \ref{lem:initial_estimate}}\label{Proof_IE}
\begin{proof}
 The assertion for the conditional means $\condmean^1_0,\condmean^2_0$ and the entries $\condvar_0^{ij} $, $i,j=1,2$, of the conditional covariance matrix    follow immediately  from Assumption \ref{ass:dfc} saying that the DFCs $\dfc^1$ and $\dfc^2$ are independent with distribution  $\mathcal{N}(\dfcmean^i,\dfcvar^i)$, $i=1,2$. Further, the assumption 	 $Y_{0}^{3} = R_{2,0}^{-}=0$ and $ Y_{0}^{4}= V^-_0 =0$ implies zero conditional means $\condmean^3_0,\condmean^4_0$ and zero entries in third and fourth rows and columns of $\condvar_0$.

Since  the total population size $N$ is assumed to be constant, the normalization property implies
\begin{align}\label{hY5}
Y_{0}^{5} = N - Y_{0}^{1} - Y_{0}^{2} - Y_{0}^{3} - Y_{0}^{4} - \sum_{i=1}^{10} Z_{0}^{i}.
\end{align}
Taking the conditional expectation given \( \mathcal{F}_{0}^{Z} \), we obtain
\[
\mathbb{E}[Y_{0}^{5} | \mathcal{F}_{0}^{Z}] = N - \mathbb{E}[Y_{0}^{1} | \mathcal{F}_{0}^{Z}] - \mathbb{E}[Y_{0}^{2} | \mathcal{F}_{0}^{Z}] - \sum_{i=1}^{10}Z_{0}^{i}.
\]
For the conditional covariance between \( Y_{0}^{1} \) and \( Y_{0}^{5} \) we use \eqref{hY5} and apply the bilinearity of the conditional covariance to obtain 
\begin{align}
\cov(Y_{0}^{1}, Y_{0}^{5} | \mathcal{F}_{0}^{Z}) &= \cov\Big(Y_{0}^{1}, N - Y_{0}^{1} - Y_{0}^{2} - \sum_{i=1}^{10} Z_{0}^{i} | \mathcal{F}_{0}^{Z}\Big)\\
&= -\cov(Y_{0}^{1}, Y_{0}^{1} | \mathcal{F}_{0}^{Z}) - \cov(Y_{0}^{1}, Y_{0}^{2} | \mathcal{F}_{0}^{Z}) - \sum_{i=1}^{10} \cov(Y_{0}^{1},Z_{0}^{i} | \mathcal{F}_{0}^{Z})\\
&=  -\var(Y_{0}^{1} | \mathcal{F}_{0}^{Z}) = -\dfcvar^1( Z_{0}^{1})^2,
\end{align}
where we have used the conditional independence of \( Y_{0}^{1} \) and \( Y_{0}^{2} \), and that $Z_{0}^{i}$ is  $\mathcal{F}_{0}^{Z}$-measurable, hence  $\cov(Y_{0}^{1},Z_{0}^{i} | \mathcal{F}_{0}^{Z})=0$.
The  expression for the conditional covariance between \( Y_{0}^{2} \) and \( Y_{0}^{5} \) follows analogously, that is 
$
\cov(Y_{0}^{2}, Y_{0}^{5} | \mathcal{F}_{0}^{Z}) = -\var(Y_{0}^{2} | \mathcal{F}_{0}^{Z}) = -\dfcvar^2( Z_{0}^{2})^2.
$

Finally, for the conditional variance of \( Y_{0}^{5} \) it follows from  \eqref{hY5} and the above results 
\[
\var(Y_{0}^{5} | \mathcal{F}_{0}^{Z}) = \var(Y_{0}^{1} | \mathcal{F}_{0}^{Z}) + \var(Y_{0}^{2} | \mathcal{F}_{0}^{Z}).
\]
Substituting $\var(Y_{0}^{i} | \mathcal{F}_{0}^{Z}) = \dfcvar^i( Z_{0}^{i})^2$, $i=1,2$,  gives 
$
\var(Y_{0}^{5} | \mathcal{F}_{0}^{Z}) = \dfcvar^1( Z_{0}^{1})^2 + \dfcvar^2( Z_{0}^{2})^2.
$   
\qed

\end{proof}
}

\end{appendix}	

\begin{acknowledgements}
	The authors thank      Olivier Menoukeu Pamen (University of Liverpool), and the collaborators within the DFG research project ``CESMO - Contain Epidemics with Stochastic Mixed-Integer Optimal Control'', in particular Gerd Wachsmuth,   Armin Fügenschuh, Markus Friedemann, Jesse Beisegel (BTU Cottbus--Senftenberg),	for insightful discussions and valuable suggestions that improved this paper.

	\smallskip\noindent
	\textbf{Funding~} 	
	The  authors gratefully acknowledge the  support by the Deutsche Forschungsgemeinschaft (DFG), award number 458468407,  and by the  German Academic Exchange Service (DAAD), award number 57417894.	
\end{acknowledgements}

\bibliographystyle{acm}

%\bibliography{Bibliography_All} %-->reference list is on the template.bib file

\end{document}